\def\figcaption{\def\@captype{figure}\caption}
\newtheorem{theorem}{\bf Theorem}[section]
\newtheorem{lemma}[theorem]{\bf Lemma}
\newtheorem{proposition}[theorem]{\bf Proposition}
\newtheorem{corollary}[theorem]{\bf Corollary}
\newtheorem{example}{\bf Example}[section]
\newtheorem{remark}{\bf Remark}[section]
\newtheorem{definition}{\bf Definition}[section]
\title{Supersymmetric quantum walks \\ with chiral symmetry}
\author{
Akito Suzuki
\thanks{Division of Mathematics and Physics, 
Faculty of Engineering, Shinshu University, Wakasato, Nagano 380-8553, Japan,
e-mail: akito@shinshu-u.ac.jp
	}
}
\begin{document}

\maketitle

\begin{abstract}
Quantum walks have attracted attention as
a promising platform realizing topological phenomena
and many physicists have introduced various types of 
indices to characterize topologically protected bound states
that are robust against perturbations.
In this paper, we introduce an index from a supersymmetric point of view.
This allows us to define indices for all chiral symmetric quantum walks
such as multi-dimensional split-step quantum walks and quantum walks 
on graphs, for which there has been no index theory. 
Moreover, the index gives a lower bound on the number of bound states
robust against compact perturbations. 
We also calculate the index for several concrete examples 
including the unitary transformation that appears in Grover's search algorithm. 
\end{abstract}
\section{Introduction}
Quantum walks have attracted attention 
as sources of ideas for quantum algorithms 
\cite{Am0, Am1, AKR, MNRoS07, P13, SKW}. 
Motivated by Grover's quantum search algorithm \cite{Gr}, 
Szegedy \cite{Sz} quantized a Markov chain on a finite bipartite graph
and defined a quantum walk, 
which has been updated 
\cite{KPSS18, MNRoS07,MNRS09, HKSS14, HSS, Se13} 
to define quantum walks on general (possibly infinite) graphs. 
What is common to such quantum walks is 
to have an evolution operator  
defined as a product of two unitary involutions. 
\subsection{Spectral mapping and supersymmetry}
For two given unitary involutions $\varGamma$ and $C$ 
on a Hilbert space $\mathcal{H}$,
we can introduce a coisometry $d$ from $\mathcal{H}$ to 
another Hilbert space $\mathcal{K}$ 
and a self-adjoint operator $T = d \varGamma d^*$ 
so that $C = 2d^*d-1$ and $\|T\| \leq 1$. 
A fascinating property of the product $U = \varGamma C$ 
of the two unitary involutions is as follows. 
Let $\varphi: S^1 \to [-1,1]$ be defined as
$\varphi(z) = (z + z^{-1})/2$.  
Then the spectrum of $U$ and 
the preimage of the spectrum of $T$ under $\varphi$
coincide except for the points $+1$ and $-1$, {\it i.e.},
\begin{equation}
\label{specU} 
\sigma\left(U_0\right) 
	= \varphi^{-1}\left(\sigma\left(T_0\right)\right),  
\end{equation}
where $U_0= U|_{\ker (U^2-1)^\perp}$ and 
$T_0= T|_{\ker(T^2-1)^\perp}$ are the restrictions 
onto $\ker (U^2-1)^\perp$ and $\ker (T^2-1)^\perp$. 
This property is called the spectral mapping theorem 
for the product of two unitary involutions. 
As depicted in Fig. \ref{fig01},
$\sigma(U_0)$ 
is divided into two parts, {\it i.e.}, 
\[ \sigma(U_0) = g_+(\sigma(T_0)) \cup  g_-(\sigma(T_0)), \]
where $g_\pm (\xi) = e^{\pm i\arccos \xi}$
for $\xi \in [-1,1]$. 
Moreover,  $U_0$ is unitarily equivalent to
\[ e^{i \arccos T_0} \oplus e^{-i \arccos T_0} \]
(see \cite{FFSd, SS, HSS} for more details). 
This is a sign of supersymmetry. 
In this paper, 
we explore the supersymmetry of the two unitary involutions.  
 
\begin{wrapfigure}{r}{75mm}
\unitlength 0.1in
\begin{picture}( 25.9600, 22.9000)(  2.6000,-23.6000)
%
\special{pn 4}%
\special{ar 1668 1266 894 898  0.0000000 6.2831853}%
%
\special{pn 8}%
\special{pa 346 1266}%
\special{pa 2856 1266}%
\special{fp}%
\special{sh 1}%
\special{pa 2856 1266}%
\special{pa 2790 1246}%
\special{pa 2804 1266}%
\special{pa 2790 1286}%
\special{pa 2856 1266}%
\special{fp}%
%
\special{pn 8}%
\special{pa 1668 2360}%
\special{pa 1668 70}%
\special{fp}%
\special{sh 1}%
\special{pa 1668 70}%
\special{pa 1648 138}%
\special{pa 1668 124}%
\special{pa 1688 138}%
\special{pa 1668 70}%
\special{fp}%
%
\special{pn 20}%
\special{ar 1668 1266 894 898  4.3138110 5.8156142}%
%
\special{pn 20}%
\special{ar 1668 1274 894 898  0.4675711 1.9702648}%
%
\special{pn 8}%
\special{pa 1302 448}%
\special{pa 1302 2100}%
\special{dt 0.045}%
%
\special{pn 8}%
\special{pa 2472 878}%
\special{pa 2472 1660}%
\special{dt 0.045}%
%
\special{pn 20}%
\special{sh 1}%
\special{ar 936 1778 10 10 0  6.28318530717959E+0000}%
%
\special{pn 8}%
\special{pa 936 762}%
\special{pa 936 1778}%
\special{dt 0.045}%
\put(26.4100,-13.4500){\makebox(0,0)[lt]{1}}%
\put(5.5000,-13.4500){\makebox(0,0)[lt]{$-1$}}%
\put(18.9500,-2.5800){\makebox(0,0){{\footnotesize $\sigma(U_0)$}}}%
%
\special{pn 8}%
\special{pa 2472 1346}%
\special{pa 2472 1220}%
\special{fp}%
%
\special{pn 8}%
\special{pa 1302 1346}%
\special{pa 1302 1220}%
\special{fp}%
\put(2.6000,-4.0300){\makebox(0,0)[lt]{{\footnotesize Eigenvalue of $U_0$}}}%
\put(2.6000,-19.9200){\makebox(0,0)[lt]{{\footnotesize Eigenvalue of $U_0$}}}%
%
\special{pn 4}%
\special{pa 868 556}%
\special{pa 922 698}%
\special{fp}%
\special{sh 1}%
\special{pa 922 698}%
\special{pa 918 630}%
\special{pa 904 648}%
\special{pa 880 644}%
\special{pa 922 698}%
\special{fp}%
%
\special{pn 4}%
\special{pa 832 1948}%
\special{pa 914 1822}%
\special{fp}%
\special{sh 1}%
\special{pa 914 1822}%
\special{pa 860 1866}%
\special{pa 884 1866}%
\special{pa 894 1888}%
\special{pa 914 1822}%
\special{fp}%
%
\special{pn 13}%
\special{pa 1306 1274}%
\special{pa 2476 1274}%
\special{fp}%
%
\special{pn 20}%
\special{sh 1}%
\special{ar 940 1266 10 10 0  6.28318530717959E+0000}%
\special{sh 1}%
\special{ar 940 1266 10 10 0  6.28318530717959E+0000}%
\put(18.9800,-11.4300){\makebox(0,0){{\footnotesize $\sigma(T_0)$}}}%
%
\special{pn 8}%
\special{pa 1896 852}%
\special{pa 1896 574}%
\special{fp}%
\special{sh 1}%
\special{pa 1896 574}%
\special{pa 1876 640}%
\special{pa 1896 626}%
\special{pa 1916 640}%
\special{pa 1896 574}%
\special{fp}%
%
\special{pn 8}%
\special{pa 1896 1660}%
\special{pa 1896 1938}%
\special{fp}%
\special{sh 1}%
\special{pa 1896 1938}%
\special{pa 1916 1872}%
\special{pa 1896 1886}%
\special{pa 1876 1872}%
\special{pa 1896 1938}%
\special{fp}%
\put(19.4900,-7.1600){\makebox(0,0)[lb]{{\footnotesize $g_+$}}}%
\put(19.5800,-17.8600){\makebox(0,0)[lb]{{\footnotesize $g_-$}}}%
%
\special{pn 20}%
\special{sh 1}%
\special{ar 940 752 10 10 0  6.28318530717959E+0000}%
\end{picture}%
  \caption{Location of the spectrum $\sigma(U_0)$. 
  The functions 
  $g_\pm$ 
  map
  $\sigma(T_0) \subset [-1,1]$  
  onto $\sigma(U_0) \subset S^1$.
  }
  \label{fig01}
\end{wrapfigure}
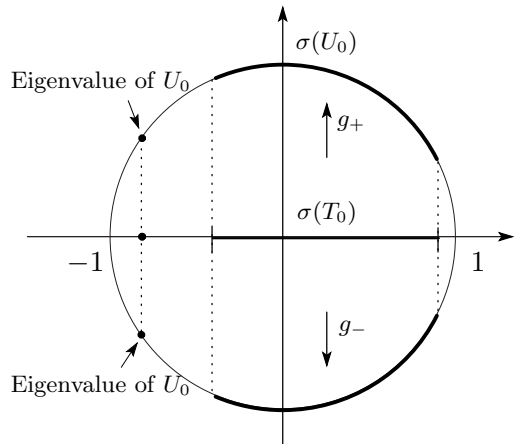

On the other hand, the quantum walks have also been viewed as 
promising platforms to realize topological phenomena \cite{Kit}. 
Kitagawa et al \cite{KBFRBKAD, KRBD} showed that 
one- and two-dimensional quantum walks exhibit topological phases
and experimentally realized topologically protected bound states. 
To this end, they employed a split-step quantum walk,
which possesses chiral symmetry, {\it i.e.},
the evolution operator $U$ satisfies
\begin{equation}
\label{chisym}
\varGamma U \varGamma = U^{-1} 
\end{equation}
with some unitary involution $\varGamma$. 
Asb\'{o}th and Obuse \cite{AsOb13} also studied 
the topological nature of a one-dimensional quantum walk
in  a chiral symmetric time frame (see also \cite{OA, OK}).
In the above studies, 
several types of topological indices were introduced  
in terms of winding numbers and Chern numbers
and they were used for characterizing the topological phenomena. 
Gross et al \cite{Gross} also established 
another index theory 
in terms of the flow of a walk \cite{Kitaev} 
and Cedzich et al \cite{Ced1, Ced2} studied topological classifications 
with various types of symmetry.
Topological phenomena for nonunitary PT-symmetric quantum walks 
were considered in \cite{MKO, XZB} 
and topological phenomena for periodically driven systems
were studied in \cite{AsTa14, KBRD, SaSch17}. 
Their definitions and proofs however  
deeply rely on the spatial dimension and geometry of the quantum walk. 
There has been no index theory that covers quantum walks on graphs 
and quantum walks for quantum algorithms.  
In the present paper, we establish index theory 
that can cover not only one and two-dimensional quantum walks 
but also such quantum walks. 
To this end, we first prove that
the evolution operator of every chiral symmetric quantum walk 
can be written as  a product of two unitary involutions
and it possesses supersymmetry. 
Then we define an index for such an evolution operator 
so that it coincides with the Witten index \cite{Witten}. 
\subsection{Index formula} 
To make it more precise, 
let $U$ obey chiral symmetry \eqref{chisym}. 
Then $C:=\varGamma U$ is a unitary involution
and hence $U = \varGamma C$ can be written as
a product of two unitary involutions. 
Actually, we can prove that
every quantum walk with an evolution operator 
represented as  a product of two unitary involutions
possesses chiral symmetry.   
Thus, we find that 
the above mentioned spectral mapping theorem
is applicable for any chiral symmetric quantum walk. 
Moreover, the spectral mapping theorem \cite{HSS} implies that
\[ \dim \ker(U \mp 1) = m_\pm + M_\pm, \]
where  $m_\pm = \dim \ker (T \mp 1)$
and $M_\pm = \dim \mathcal{B}_\pm$, 
and $\mathcal{B}_\pm := \ker (\varGamma \pm 1) \cap \ker d$ 
is called the birth eigenspaces \cite{HKSS14, MOS, Se13}. 
The supersymmetric structure is introduced as follows. 
From \eqref{chisym},
we observe  that
\begin{equation}
\label{def_Q} 
Q := \frac{1}{2i} [ \varGamma, C] 
\end{equation}
plays a role of supercharge:
$Q$ anticommutes with $\varGamma$,
{\it i.e.},  $\{\varGamma, Q\} = 0$.
Here $[ A, B] := AB- BA$ and  $\{A, B\} := AB+BA$
are the commutator and anticommutator of $A$ and $B$.   
From a standard argument of supersymmetric quantum mechanics,
the supersymmetric Hamiltonian $H := Q^2$ is decomposed into 
$H = H_+ \oplus H_-$
on $\ker(\varGamma-1) \oplus \ker(\varGamma+1)$. 
Then we define an index ${\rm ind}_\varGamma(U)$ for $U$ so that
${\rm ind}_\varGamma(U)$ agrees with the Witten index of $H$:
$\dim \ker H_+ - \dim \ker H_-$. 
The main result of this paper is the following index formula. 
\begin{equation}
\label{eq013}
{\rm ind}_\varGamma(U) 
 	= (M_- - m_-) - (M_+ - m_+).  
\end{equation}
It is clear from \eqref{eq013} that the absolute value on
${\rm ind}_\Gamma(U)$ gives a lower bound of
the number of eigenvalues for $U$:
\begin{equation} 
\label{eq014} 
\dim \ker (U - 1) + \dim \ker (U + 1)
 	\geq |{\rm ind}_\Gamma(U)|. 
\end{equation}
In particular, the equality in \eqref{eq014} holds
if $m_- = M_+ = 0$. 
We emphasize that 
$m_\pm - M_\pm$ depend on the choice of $\varGamma$
and so is ${\rm ind}_\varGamma(U)$,
while $m_\pm + M_\pm = \ker(U \mp 1)$ 
are independent of the choice of $\varGamma$. 
Example \ref{ex:4d} makes this evident. 
Therefore \eqref{eq014} motivates us to develop a way to know the index
without calculating the dimension of the kernels,
because it gives a sufficient condition 
for $U$ to have eigenvalues $\pm 1$.  
This possibility is explored in a forth coming paper \cite{SuTa}. 
As expected, we can prove that
${\rm ind}_\varGamma(U)$ 
is invariant under compact perturbations
if $H = Q^2$ is Fredholm.  
Thus we see that the eigenstates corresponding to $\pm 1$
for chiral symmetric quantum walks 
are robust against perturbations. 
This phenomena can be interpreted as a topological protection
of bound states (see Gesztesy and Simon \cite{GS}, 
where the invariance of the Witten index against compact perturbation
was called topological invariance). 
Therefore a nonzero index ${\rm ind}_\varGamma(U) \not=0$ 
can mathematically guarantee the existence of topologically protected 
bound states as found in \cite{AsOb13, KBFRBKAD, KRBD}. 
Such bound states are also expected to be localized at boundaries. 
Actually, in \cite{FFS1} we proved exponential decay of 
bound states in the birth eigenspaces.

\subsection{Comparison with related work} 
Avron et al. \cite{ASS1} defined an index for a Fredholm pair 
$(P_1, P_2)$ of two projections $P_1$ and $P_2$ so that
${\rm index}(P_1, P_2) 
= \dim \ker (P_1-P_2-1) - \dim \ker (P_1- P_2 + 1)$. 
This inspires us to introduce the following terminology
in order to give a criterion for the Fredholmness of
superhamiltonians. 
For two unitary involutions $\varGamma$ and $C$,
we say that $(\varGamma, C)$ is a Fredholm pair
if $H = Q^2$ is Fredholm with $Q$ defined in \eqref{def_Q}. 
Let $\varGamma_+$ and $C_\pm$ be the projections onto
$\ker (\varGamma-1)$ and $\ker (C \mp 1)$. 
The index for two projections and the index we consider 
in this paper are related as follows. 
\begin{equation}
\label{eq:inds} {\rm ind}_\varGamma(U)
	= {\rm index}(\varGamma_+, C_+) 
		+ {\rm index}(\varGamma_+,C_-). 
\end{equation}
Avron et al applied their index to study the charge deficiency \cite{ASS2},
in which they took two projections as $P_1$ and $P_2:=W P_1 W^*$
with $W$ a unitary operator and obtained
\[ {\rm Index}(P_1, P_2) = {\rm Tr}(\{[P_1,W]W^*\}^{2n+1}) \]
whenever $\{[P_1,W]W\}^{2n+1}$ is trace class. 
To define an index ${\rm ind} (U)$ 
for a one-dimensional quantum walk with an evolution $U$,
Gross et al \cite{Gross} employed the above formula 
with $n=0$, $W = U$, and
$P_1 = P$ the projection onto the half line,  
{\it i.e.}, ${\rm ind} (U) = {\rm Tr} ([P,U] U^*)$. 
Usually, standard one-dimensional quantum walks
have evolution operators of the form $U = SC$,
where $S$ is a shift operator and $C$ is a coin operator defined by 
a multiplication operator by $C(x) \in U(2)$. 
In such a case, the above index defined by Gross et al. 
cannot give different indices for different coins, 
because ${\rm Tr} ([P,U] U^*) = {\rm Tr}([P,S]S^*)$.
In particular, if 
$S = S_{\rm ss} :=\begin{pmatrix} p & \bar q L^* \\ q L &  -p \end{pmatrix}$
with $p^2 + |q|^2 = 1$ ($p \in \mathbb{R}$, $q \in \mathbb{C}$)
and $L$ the left-shift operator, 
a direct calculation yields ${\rm ind}(U) = {\rm Tr}([P,S]S^*) = 0$. 
In \cite{Ced1, Ced2}, Cedzich et al. dealt with indices 
defined by means of ${\rm Im} U:= (U- U^*)/2i$, 
which is equal to our supercharge $Q$, because 
$[\varGamma, C] =  \varGamma C - C \varGamma = U- U^*$.  
However, their construction and proofs seem to depend
on the one-dimensionality.  They did not obtain 
the formula \eqref{eq013} and did not mention supersymmetry. 

Because $S_{\rm ss}$ is a unitary involution, 
all one-dimensional quantum walks given by evolution operators
$U = S_{\rm ss} C$ with $C(x)$ unitary involution matrices 
are typical examples of index theory developed in this paper.  
This model includes 
all translation invariant standard one-dimensional quantum walks 
(even if $C(x)$ is not an involution) and 
Kitagawa's one-dimensional quantum walks \cite{KRBD}
(see \cite{FFS1, FFSd} for more details).  
We calculate the indices for such walks and give a trace formula
in a companion paper \cite{SuTa}.
Our framework also covers
multi-dimensional split-step quantum walks \cite{FFSd}, 
Grover's search algorithm (see Section \ref{subsec:GA}), 
the Grover walks on graphs (see Section \ref{subsec:GW}),  
the (twisted) Szegedy walks \cite{HKSS14}, 
and the Staggered quantum walks \cite{Po16}. 
 
This paper is organized as follows. 
Section 2 is devoted to defining the index for unitary operators. 
To this end, we study the relation between chiral symmetry 
for a unitary operator and supersymmetry for a pair of unitary involutions. 
In Section 3, we formulate the index formula \eqref{eq013} 
in terms of the spectral mapping theorem for pairs of unitary involutions. 
Here we also prove several properties for the index formula. 
In Section 4, we prove the index formula. 
We close this paper with three examples. 
In  Subsection 5.1, we give finite dimensional toy models.
In Subsection 5.2, we calculate the index for a unitary operator
that appears in Grover's search algorithm. 
Finally, we consider the Grover walks on graphs in Subsection 5.3.

\section{Chiral symmetry and supersymmetry}
\label{sec:CSSS}
Throughout this paper,
we assume that all Hilbert spaces are separable.  
We say that an operator $X$ is an involution
if $X^2=1$. The following is standard. 
\begin{remark}
\label{rem:USI}
If an operator $X$ has any two of the following three properties,
then it has all three properties:
(1) $X$ is self-adjoint, {\it i.e.}, $X^*=X$;
(2) $X$ is unitary, {\it i.e.}, $X^*=X^{-1}$; 
(3) $X$ is involutory, {\it i.e.}, $X^2=1$. 
\end{remark}
\subsection{Chiral symmetry}
\begin{definition}
Let $U$ be a unitary operator on a Hilbert space $\mathcal{H}$.
Then we say that $U$ has chiral symmetry 
if there exists a unitary involution 
$\varGamma$ on $\mathcal{H}$  such that
\begin{equation} 
\label{CS}
\varGamma U \varGamma = U^{-1}. 
\end{equation}
\end{definition}
\begin{lemma}
\label{lem:CSGC}
Let $U$ be a unitary operator on a Hilbert space $\mathcal{H}$.
The following are equivalent. 
\begin{itemize}
\item[(1)] $U$ has chiral symmetry.
\item[(2)] $U$ is a product of two unitary involutions. 
\end{itemize}
In particular, if $U$ satisfies \eqref{CS} with a unitary involution 
$\varGamma$, then $C:=\varGamma U$ is a unitary involution and
\begin{equation} 
\label{GC}
U = \varGamma C. 
\end{equation}
\end{lemma}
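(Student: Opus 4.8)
The plan is to prove the equivalence by establishing the two implications separately, since each direction reduces to a short algebraic manipulation with unitary involutions; the ``in particular'' statement will then fall out for free from the proof of $(1)\Rightarrow(2)$.

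First I would treat $(1)\Rightarrow(2)$. Assuming \eqref{CS} holds with a unitary involution $\varGamma$, I set $C:=\varGamma U$. As a product of unitaries, $C$ is unitary. To see that $C$ is an involution I compute $C^2=\varGamma U\varGamma U$ and insert \eqref{CS} in the form $\varGamma U\varGamma=U^{-1}$, which gives $C^2=U^{-1}U=1$; hence $C$ is a unitary involution. Finally, since $\varGamma^2=1$, we get $\varGamma C=\varGamma^2U=U$, which is exactly \eqref{GC}. This single computation simultaneously yields $(2)$ and the ``in particular'' assertion.

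Next I would treat $(2)\Rightarrow(1)$. Given $U=\varGamma C$ with $\varGamma$ and $C$ unitary involutions, I use that each is self-adjoint (Remark \ref{rem:USI}) and therefore equals its own inverse, so that $U^{-1}=C^{-1}\varGamma^{-1}=C\varGamma$. On the other hand, $\varGamma U\varGamma=\varGamma(\varGamma C)\varGamma=\varGamma^2C\varGamma=C\varGamma$. Comparing the two expressions gives $\varGamma U\varGamma=U^{-1}$, so $U$ has chiral symmetry with the same involution $\varGamma$.

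There is no genuine obstacle here; the only point deserving a moment's care is that a unitary involution is automatically self-adjoint, so that $X^{-1}=X$ for $X\in\{\varGamma,C\}$ — precisely the content of Remark \ref{rem:USI}, used (silently) in both directions. One should also note that the \emph{same} $\varGamma$ is carried through: in $(1)\Rightarrow(2)$ the involution $C$ is constructed from the given $\varGamma$, and in $(2)\Rightarrow(1)$ the chiral relation is verified with the given left factor $\varGamma$.
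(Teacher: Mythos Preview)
Your proof is correct and follows essentially the same route as the paper's own argument: both directions are handled by the same short algebraic manipulations with $C=\varGamma U$ and the identity $\varGamma U\varGamma=C\varGamma=U^{-1}$. The only difference is cosmetic---you spell out the $(2)\Rightarrow(1)$ computation in slightly more detail and explicitly invoke Remark~\ref{rem:USI}, which the paper uses tacitly.
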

\begin{remark}
\label{rem:decomp}
The product decomposition of (2) is not necessary unique.
In fact, if $U_1$ and $U_2$ are unitary involutions,
then $-U_1$ and  $-U_2$ are also unitary involutions 
and $U_1 U_2 = (-U_1)(-U_2)$. 
See also Example \ref{ex:4d}. 
\end{remark}
\begin{proof}[Proof of Lemma \ref{lem:CSGC}]
Let $U$ satisfy \eqref{CS}. 
Then $C=\varGamma U$ is unitary. 
From \eqref{CS}, $C^2 = (\varGamma U)^2 = 1$. 
Because $\varGamma$ is a unitary involution,
$U = \varGamma^2 U = \varGamma \cdot (\varGamma U) = \varGamma C$.  
Hence (1) implies (2). 

Conversely, suppose that $U$ satisfies \ref{GC} and
$\varGamma$ and $C$ are unitary involutions. 
Then $\varGamma U \varGamma = C \varGamma = U^{-1}$.
Hence (2) implies (1). 

Thus we have the desired conclusion.
\end{proof}
\subsection{Supersymmetry}
Let $U$ have chiral symmetry 
with a unitary involution $\varGamma$ satisfying \eqref{CS}
and set $C = \varGamma U$. 
Then 
\[ R:= \frac{1}{2} \{ \varGamma, C \},
\quad 
Q:= \frac{1}{2i} [ \varGamma, C]
\]
are self-adjoint.  
\begin{lemma}
\label{lem:comm}
Let $U$ and $\varGamma$ be as stated above.
Then 
\begin{itemize}
\item[(1)] $[\varGamma,R]=0$,
\item[(2)] $\{\varGamma,Q\}=0$,
\end{itemize}
where $[X,Y]:=XY-YX$ and $\{X,Y\}:=XY+YX$. 
\end{lemma}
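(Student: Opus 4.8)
The plan is to verify both identities by a direct computation using nothing beyond the involution property $\varGamma^2 = 1$. (The self-adjointness and unitarity of $\varGamma$ and $C$ have already been invoked to guarantee that $R$ and $Q$ are self-adjoint, but they are not needed for the commutation relations themselves.)

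For (1), I would expand $R = \frac{1}{2}(\varGamma C + C\varGamma)$ and compute the two products $\varGamma R$ and $R\varGamma$ separately. Left multiplication by $\varGamma$ gives $\varGamma R = \frac{1}{2}(\varGamma^2 C + \varGamma C\varGamma) = \frac{1}{2}(C + \varGamma C\varGamma)$, using $\varGamma^2 = 1$; right multiplication gives $R\varGamma = \frac{1}{2}(\varGamma C\varGamma + C\varGamma^2) = \frac{1}{2}(\varGamma C\varGamma + C)$. The two expressions coincide, hence $[\varGamma, R] = \varGamma R - R\varGamma = 0$.

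For (2) the argument is completely parallel. From $Q = \frac{1}{2i}(\varGamma C - C\varGamma)$ one obtains $\varGamma Q = \frac{1}{2i}(C - \varGamma C\varGamma)$ and $Q\varGamma = \frac{1}{2i}(\varGamma C\varGamma - C)$, again by $\varGamma^2 = 1$; adding these gives $\varGamma Q + Q\varGamma = 0$, i.e. $\{\varGamma, Q\} = 0$.

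There is no genuine obstacle here: the lemma is a formal consequence of $\varGamma$ being an involution, and neither $C^2 = 1$ nor the chiral-symmetry relation \eqref{CS} enters at this stage. Conceptually, writing $\varGamma C = R + iQ$ (which follows from $2\varGamma C = \{\varGamma,C\} + [\varGamma,C]$), statements (1) and (2) together say precisely that conjugation by $\varGamma$ carries $\varGamma C = R + iQ$ to $C\varGamma = R - iQ$, i.e.\ it fixes the ``real part'' $R$ and negates the ``imaginary part'' $Q$. This is the form in which the lemma will be used to decompose the superhamiltonian $H = Q^2$ along $\ker(\varGamma - 1) \oplus \ker(\varGamma + 1)$.
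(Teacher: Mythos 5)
Your proof is correct. It is, however, a slightly different route from the paper's. You expand $R=\tfrac12(\varGamma C+C\varGamma)$ and $Q=\tfrac1{2i}(\varGamma C-C\varGamma)$ and use only $\varGamma^2=1$; as you rightly observe, this makes the lemma a purely formal fact about commutators and anticommutators with an involution, valid for any operator $C$ whatsoever. The paper instead first substitutes $C=\varGamma U$ to identify $R=\operatorname{Re}U=(U+U^*)/2$ and $Q=\operatorname{Im}U=(U-U^*)/2i$, and then derives both relations from the chiral-symmetry identity $\varGamma U=U^*\varGamma$. The two arguments are equally short, but the paper's detour is not wasted: the identifications $R=\operatorname{Re}U$ and $Q=\operatorname{Im}U$ are reused repeatedly later (e.g.\ to show in Proposition \ref{pr:fred} that $H=(\operatorname{Im}U)^2$ is independent of the choice of $\varGamma$, and in the proof of Proposition \ref{pr:ht}). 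Your approach buys generality and makes transparent exactly which hypotheses are needed; the paper's buys the representation of the supercharge in terms of $U$ alone, which is the workhorse of the rest of the argument.
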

\begin{proof}
Because $C = \varGamma U$, we observe that
\[ R = {\rm Re} U := \frac{U+U^*}{2},
\quad 
Q = {\rm Im} U := \frac{U-U^*}{2i}.
\]
By \eqref{CS}, $\varGamma U = U^* \varGamma$
and $U \varGamma = \varGamma U^*$.  
Hence, $\varGamma R = R \varGamma$ and $\varGamma Q = -Q \varGamma$.
This proves (1) and (2).  
\end{proof}
From Remark \ref{rem:USI},
the spectrum of $\varGamma$ is $\sigma(\varGamma) = \{1, -1\}$
and the spectral decomposition of $\varGamma$ is
\[ \varGamma = \varGamma_+ - \varGamma_-, \]
where $\varGamma_\pm = (1 \pm \varGamma)/2$
is the projection onto $\ker(\varGamma \mp 1)$. 
With the identification $\mathcal{H} = {\rm Ran} \varGamma_+ \oplus {\rm Ran} \varGamma_-$,
$\varGamma$ is written as 
\[ \varGamma = \begin{pmatrix} 1 & 0 \\ 0 & -1 \end{pmatrix}. \]
With this notation, (2) of Lemma \ref{lem:comm} yields
\begin{equation}
\label{rep.Q} 
Q = \begin{pmatrix} 0 & \alpha^* \\ \alpha & 0\end{pmatrix}, 
\end{equation}
where $\alpha = \varGamma_- Q \varGamma_+$ 
is an operator from ${\rm Ran} \varGamma_+ \to {\rm Ran} \varGamma_-$. 
We set $H = Q^2$ and write
\[ H = \begin{pmatrix} H_+ & 0 \\ 0 & H_- \end{pmatrix}, \]
where $H_+ = \alpha^* \alpha$ and $H_- = \alpha \alpha^*$. 
In the context of supersymmetry,
$Q$ is called a {\it supercharge} and $H$ is a {\it superhamiltonian}
\cite{Witten} (see also \cite{Thaller}). 
The Witten index of $H$ is defined as
\[ \Delta(H) = \dim \ker H_+ - \dim \ker H_-.  \]
\subsection{Index of a Fredholm pair}
In this paper, we introduce an index for a pair $(U, \varGamma)$
of a unitary operator $U$ and a unitary involution $\varGamma$
satisfying \eqref{CS}. 
Before that, inspired by \cite{ASS1}, 
we introduce the following terminology.
We say that a pair $(C_1, C_2)$ of two unitary involutions 
is a Fredholm pair if $([C_1, C_2]/2i)^2$ is Fredholm. 
By definition, the pair $(\varGamma, C)$ of 
unitary involutions with  $C:= \varGamma U$
is a Fredholm pair if and only if $H = Q^2$ is Fredholm.

\begin{definition}
\label{def:ind}
Let $U$ and $\varGamma$ satisfy \eqref{CS}
and let $\alpha$ be as stated above. 
We say that $(U, \varGamma)$
is a Fredholm pair if $\alpha$ is Fredholm,
{\it i.e.}, $\dim \ker \alpha < \infty$, $\dim \ker \alpha^* < \infty$,
and ${\rm Ran}(\alpha)$ is closed.   
In this case, the index of the pair $(U,\varGamma)$ is defined by
\begin{equation}
\label{eq.ind}
{\rm ind}_\varGamma (U) = {\rm index} (\alpha), 
\end{equation}
where 
${\rm index} (\alpha):= \dim \ker \alpha - \dim \ker \alpha^*$ 
is the Fredholm index of $\alpha$. 
\end{definition}
\begin{proposition}
\label{pr:fred}
Let $U$ be unitary and 
$\varGamma$, $\varGamma^\prime$ be unitary involutions
satisfying $\varGamma U \varGamma = U^{-1}$ 
and $\varGamma^\prime U \varGamma^\prime = U^{-1}$. 
If $(U,\varGamma)$ is a Fredholm pair,
then so is $(U,\varGamma^\prime)$. 
\end{proposition}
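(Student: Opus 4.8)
The plan is to exploit a feature that is already visible in the proof of Lemma~\ref{lem:comm}: the supercharge does not depend on the choice of the chiral unitary involution. Indeed, if a unitary involution $\varGamma$ satisfies \eqref{CS} and we set $C = \varGamma U$, then
\[
Q = \frac{1}{2i}[\varGamma, C] = \frac{1}{2i}\bigl(\varGamma\cdot\varGamma U - \varGamma U\cdot\varGamma\bigr) = \frac{U - \varGamma U\varGamma}{2i} = \frac{U - U^{-1}}{2i} = {\rm Im}\,U ,
\]
and the same computation with $\varGamma^\prime$ in place of $\varGamma$ shows that the supercharge associated with $\varGamma^\prime$ is $(U-U^{-1})/2i = {\rm Im}\,U$ as well. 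Hence $\varGamma$ and $\varGamma^\prime$ give rise to one and the same self-adjoint operator $Q = {\rm Im}\,U$, and therefore to the same superhamiltonian $H = Q^2$. So being a Fredholm pair ought to be a property of ${\rm Im}\,U$ alone, and that is what I want to extract.

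To make this precise I would translate Fredholmness of the corner operator into a statement about $Q$ itself. With respect to $\mathcal{H} = {\rm Ran}\,\varGamma_+ \oplus {\rm Ran}\,\varGamma_-$ we have the block form \eqref{rep.Q}, $Q = \begin{pmatrix} 0 & \alpha^* \\ \alpha & 0 \end{pmatrix}$, from which $\ker Q = \ker\alpha \oplus \ker\alpha^*$ and ${\rm Ran}\,Q = {\rm Ran}\,\alpha^* \oplus {\rm Ran}\,\alpha$. Using the standard fact that a bounded operator has closed range if and only if its adjoint does, one sees that $\alpha$ is Fredholm in the sense of Definition~\ref{def:ind} (finite-dimensional kernel and cokernel, closed range) precisely when $\dim\ker Q < \infty$ and ${\rm Ran}\,Q$ is closed, i.e.\ precisely when $Q = {\rm Im}\,U$ is a Fredholm self-adjoint operator; equivalently, since $\ker Q^2 = \ker Q$ and, for self-adjoint $Q$, ${\rm Ran}\,Q^2$ is closed iff ${\rm Ran}\,Q$ is, precisely when $H = Q^2$ is Fredholm. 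The crucial point is that this last condition mentions only ${\rm Im}\,U$.

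Putting the two steps together: if $(U,\varGamma)$ is a Fredholm pair then ${\rm Im}\,U$ is a Fredholm self-adjoint operator by the second step; now run the second step in the reverse direction with $\varGamma^\prime$, whose associated corner operator $\alpha^\prime$ fits into the block decomposition of the very same $Q = {\rm Im}\,U$ relative to ${\rm Ran}\,\varGamma^\prime_+ \oplus {\rm Ran}\,\varGamma^\prime_-$, to conclude that $\alpha^\prime$ is Fredholm, i.e.\ $(U,\varGamma^\prime)$ is a Fredholm pair. The only place that needs genuine care is the equivalence in the second step — concretely, the claim that ${\rm Ran}\,\alpha$ is closed if and only if ${\rm Ran}\,Q$ is — while everything else is formal once one notices that $Q = {\rm Im}\,U$ is manifestly independent of the chiral involution.
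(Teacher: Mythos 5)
Your argument is correct and follows essentially the same route as the paper: the paper's proof also reduces Fredholmness of $\alpha$ to Fredholmness of the superhamiltonian $H=({\rm Im}\,U)^2$, which is manifestly independent of the choice of $\varGamma$, citing \cite{ASS1, Thaller} for the equivalence. You simply work one step closer to the source, via $Q={\rm Im}\,U$ and the block decomposition \eqref{rep.Q}, and supply the closed-range details that the paper delegates to references.
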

\begin{proof}
Observe that the operator $\alpha$ is Fredholm if and only if
\begin{equation}
\label{eq:hfred} 
\dim \ker H_\pm < \infty \quad \mbox{and} \quad 
\inf \sigma(H_+)\setminus \{0\} > 0. 
\end{equation}
Moreover, \eqref{eq:hfred} is equivalent to saying that
$H$ is Fredholm
(see \cite{ASS1, Thaller} for more details).  
Because $H = ({\rm Im} U)^2$ is independent of the choice 
of $\varGamma$,
we obtained the desired assertion.
\end{proof}
\begin{remark}
\label{rem:ind}
Because from the above proof, $\alpha$ is Fredholm if and only if
$H$ is Fredholm, 
$(U,\varGamma)$ is a Fredholm pair if and only if so is $(\varGamma, C)$.
Therefore we henceforth only consider the Fredholmness of pairs 
$(U,\varGamma)$ of a unitary operator $U$ 
and a unitary involution $\varGamma$.   
Moreover, we observe from \eqref{eq:hfred} that 
if $(U,\varGamma)$ is a Fredholm pair, 
then ${\rm index}(\alpha) = \Delta(H)$. 
Because 
the Hamiltonian $H$ is independent of the choice of $\varGamma$,
one may feel that the index is also independent of 
the choice of $\varGamma$. 
However, the definition of the Witten index 
depends on the choice of $\varGamma$,
because $H_\pm$ are determined by $\varGamma$. 
As mentioned in Remark \ref{rem:decomp}, 
the decomposition $U=\varGamma C$ 
by two unitary involutions $\varGamma$ and $C$ is not necessary unique.
Hence, it is possible that there are two unitary involutions 
$\varGamma_i$  such that
$\varGamma_i U \varGamma_i = U^{-1}$ ($i=1,2$).  
Indeed, Example \ref{ex:4d} reveals that 
there are Fredholm pairs $(U,\varGamma)$ 
and $(U,\varGamma^\prime)$
such that 
${\rm ind}_{\varGamma^\prime}(U) \not= {\rm ind}_{\varGamma}(U) $.
That is why we define an index
for a pair $(U,\varGamma)$ and not for $U$ itself.
\end{remark}
\section{Index formula}
\label{sec:ind}
Throughout this section, 
we assume that $U$ is a unitary operator 
on a Hilbert space $\mathcal{H}$ and it has chiral symmetry.  
Then Lemma \ref{lem:CSGC} says that 
$U$ is written as a product of two unitary involutions 
$\varGamma$ and $C$, {\it i.e.}, $U = \varGamma C$. 
Without loss of generality, 
we can suppose that $\ker(C-1) \not=\{0\}$,
because if $\ker(C-1) =\{0\}$, 
then $-C= 1$ and $-\varGamma$ are unitary involutions
and $U = (-\varGamma)(-C)$ is a product of two unitary involutions. 

In this section,  
we give an explicit expression for 
${\rm ind}_\varGamma(U)$ defined in \eqref{eq.ind}. 
To this end, we review a previous result \cite{HSS, SS}
on a spectral mapping theorem for a product of 
two unitary involutions. 
\begin{theorem}[\cite{HSS,SS}]
\label{thm:smt}
Let $\mathfrak{H}$ and $\mathfrak{K}$ 
be Hilbert spaces. 
Let $\gamma$ be a unitary involution on $\mathfrak{H}$ and 
$\partial:\mathfrak{H} \to \mathfrak{K}$ be a coisometry,
{\it i.e.}, $\partial \partial^* = 1$ on $\mathfrak{K}$.
\begin{itemize}
\item[(i)] $\tau := \partial \gamma \partial^*$ is 
bounded and self-adjoint on $\mathfrak{K}$
with $\|\tau\| \leq 1$. 
\item[(ii)] $u:= \gamma (2\partial^*\partial -1)$ is unitary and
\[ \sigma_\sharp(u) 
	= \varphi^{-1}(\sigma_\sharp(\tau)),
	\quad \sharp = \mbox{c, ac, sc}, 
\]
where $\varphi(z) = (z+z^{-1})/2$.
\item[(iii)] For $\lambda \in \sigma_{\rm p}(u)$,
\[ \dim \ker(u-\lambda)
	= \begin{cases} 
		\dim \ker (\tau -\varphi(\lambda)), & \lambda \not= \pm 1 \\
		m_\pm + M_\pm, & \lambda = \pm 1,
	\end{cases} \]
where $m_\pm:= \dim \ker (\tau \mp 1)$
and  $M_\pm := \dim \ker (\gamma \pm 1) \cap \ker \partial$. 
\end{itemize}
\end{theorem}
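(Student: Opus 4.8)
\emph{Proof plan.} Part (i) is immediate: $\tau^*=\partial\gamma^*\partial^*=\tau$ because $\gamma$ is self-adjoint, and $\|\tau\|\le\|\partial\|\,\|\gamma\|\,\|\partial^*\|=1$ since a coisometry and its adjoint have norm one. For (ii)--(iii) the plan is to organise everything around the self-adjoint operators $\mathrm{Re}(u):=\tfrac12(u+u^*)$ and $\mathrm{Im}(u):=\tfrac1{2i}(u-u^*)$. Setting $P:=\partial^*\partial$, which is the orthogonal projection onto $\mathrm{Ran}\,\partial^*$ because $\partial\partial^*=1$, the involution in the definition of $u$ is $C:=2P-1$, so $u=\gamma C$ is a product of two unitary involutions and, because $C=\gamma u$, Lemma~\ref{lem:comm} gives $[\gamma,\mathrm{Re}(u)]=0$ and $\{\gamma,\mathrm{Im}(u)\}=0$, while unitarity of $u$ forces $[\mathrm{Re}(u),\mathrm{Im}(u)]=0$ and $\mathrm{Re}(u)^2+\mathrm{Im}(u)^2=1$. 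Using $\partial C=2(\partial\partial^*)\partial-\partial=\partial$ one computes $\partial u^*=\partial\gamma$ (hence, by adjoints, $u\partial^*=\gamma\partial^*$) and $\partial u=2\tau\partial-\partial\gamma$, which combine into the intertwining relations $\partial\,\mathrm{Re}(u)=\tau\,\partial$ and $\mathrm{Re}(u)\,\partial^*=\partial^*\tau$.

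The next step is to identify $\mathrm{Re}(u)=\tfrac12\{\gamma,C\}$ up to unitary equivalence. A short calculation shows $\mathrm{Re}(u)$ maps $\ker(C-1)=\mathrm{Ran}\,\partial^*$ into itself, hence is reduced by $\mathfrak{H}=\ker(C-1)\oplus\ker(C+1)$; on $\ker(C-1)$ the unitary $\partial^*$ conjugates it to $\tau$. On $\ker(C+1)=\ker\partial$ I would split off the ``birth'' part $\mathcal{L}^\perp:=\ker\partial\cap\gamma\ker\partial$, on which $C=-1$ and $u=-\gamma$, a self-adjoint unitary whose $\pm1$-eigenspaces are $\ker(\gamma\pm1)\cap\ker\partial$, of dimensions $M_+$ and $M_-$; on the remaining piece $\ker\partial\ominus\mathcal{L}^\perp$, which coincides with the closure of $\{(\gamma\partial^*-\partial^*\tau)b:b\in\mathfrak{K}\}$, the operator $\Phi:=\gamma\partial^*-\partial^*\tau$ satisfies $\Phi^*\Phi=1-\tau^2$, so its polar part is a unitary from $\mathfrak{K}_0:=(\ker(\tau^2-1))^\perp$ onto that piece intertwining $\tau_0:=\tau|_{\mathfrak{K}_0}$ with $\mathrm{Re}(u)$ (one checks $\mathrm{Re}(u)\Phi=\Phi\tau$ from the formula $\mathrm{Re}(u)|_{\ker\partial}=-(1-P)\gamma$). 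Collecting the three pieces, $\mathrm{Re}(u)\cong\tau\oplus\tau_0\oplus(-\gamma|_{\mathcal{L}^\perp})$.

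It then remains to reconstruct $u$ from its real part. On $\mathcal{N}^\perp:=\ker(\mathrm{Re}(u)-1)\oplus\ker(\mathrm{Re}(u)+1)$ the relation $\mathrm{Im}(u)^2=1-\mathrm{Re}(u)^2=0$ forces $\mathrm{Im}(u)=0$, so $u=\mathrm{Re}(u)=\pm1$ there and $\dim\ker(u\mp1)=\dim\ker(\mathrm{Re}(u)\mp1)=\dim\ker(\tau\mp1)+\dim(\ker(\gamma\pm1)\cap\ker\partial)=m_\pm+M_\pm$, which is (iii) at $\lambda=\pm1$. On the complementary spectral subspace $\mathcal{N}$ of $\mathrm{Re}(u)$ (for the open interval $(-1,1)$), $\mathrm{Im}(u)$ has trivial kernel, so $\varepsilon:=\mathrm{sgn}\,\mathrm{Im}(u)$ is a self-adjoint unitary; since $\gamma$ anticommutes with $\varepsilon$ and commutes with $\mathrm{Re}(u)$, $\gamma$ identifies the $\varepsilon=\pm1$ subspaces of $\mathcal{N}$, on which $u$ acts, via $u=\mathrm{Re}(u)+i\varepsilon(1-\mathrm{Re}(u)^2)^{1/2}$ and $\gamma u\gamma=u^{-1}$, as $e^{\pm i\arccos A}$, where $A$ is the restriction of $\mathrm{Re}(u)$ to the $\varepsilon=1$ subspace. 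As $\mathrm{Re}(u)|_{\mathcal{N}}\cong\tau_0\oplus\tau_0$ (the third summand above has spectrum in $\{\pm1\}$), a multiplicity argument yields $A\cong\tau_0$, hence $u|_{\mathcal{N}}\cong e^{i\arccos\tau_0}\oplus e^{-i\arccos\tau_0}$. Then (ii) follows by the functional calculus: $g_\pm(\xi)=e^{\pm i\arccos\xi}$ are homeomorphisms of $[-1,1]$ onto the two closed semicircles with $g_+(S)\cup g_-(S)=\varphi^{-1}(S)$, $\sigma_\sharp(\tau_0)=\sigma_\sharp(\tau)$ for $\sharp\in\{\mathrm c,\mathrm{ac},\mathrm{sc}\}$, and $u|_{\mathcal{N}^\perp}=\pm1$ contributes only point spectrum; and (iii) for $\lambda\ne\pm1$ follows because then only one of $e^{\pm i\arccos\tau_0}$ has $\lambda$ in its point spectrum, with multiplicity $\dim\ker(\tau_0-\varphi(\lambda))=\dim\ker(\tau-\varphi(\lambda))$.

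The step I expect to require the most care is the second paragraph: verifying that the listed subspaces genuinely reduce $\mathrm{Re}(u)$, carrying out the polar-decomposition identification of $\ker\partial\ominus\mathcal{L}^\perp$ with $\mathfrak{K}_0$, and justifying the cancellation $A^{\oplus2}\cong\tau_0^{\oplus2}\Rightarrow A\cong\tau_0$; together with checking that the homeomorphisms $g_\pm$ respect the absolutely continuous and singular continuous parts, including the behaviour at the endpoints $\pm1$.
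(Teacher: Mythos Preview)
The paper does not itself prove this theorem: it is quoted from \cite{HSS,SS} and used as input (the unitary equivalence $u\simeq e^{i\arccos\tau_0}\oplus e^{-i\arccos\tau_0}\oplus 1\oplus(-1)$ that you derive is precisely what the paper later invokes from \cite{SS} in the proof of Proposition~\ref{pr:ht}). Your sketch is a correct reconstruction along the same lines as those references: decompose $\mathfrak{H}=\mathrm{Ran}\,\partial^*\oplus(\ker\partial\ominus\mathcal{L}^\perp)\oplus\mathcal{L}^\perp$, identify $\mathrm{Re}(u)$ block by block as $\tau\oplus\tau_0\oplus(-\gamma|_{\mathcal{L}^\perp})$, and rebuild $u$ from $\mathrm{Re}(u)$ and $\varepsilon=\mathrm{sgn}\,\mathrm{Im}(u)$. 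The computations you flag as needing care (reducing subspaces, the polar decomposition of $\Phi$, behaviour of $g_\pm$) all go through as you indicate; in particular the endpoint issue for $g_\pm$ at $\xi=\pm1$ is harmless for $\sigma_{\rm ac}$ and $\sigma_{\rm sc}$ because singletons carry no continuous spectral measure.

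One minor divergence from the direct construction in \cite{SS}: you reach $A\oplus A\cong\tau_0\oplus\tau_0$ and then cancel via multiplicity theory. This is legitimate on separable spaces (Hahn--Hellinger), but you can sidestep it by identifying $\mathcal{N}_+$ with $\mathfrak{K}_0$ explicitly. With $W:=\Phi(1-\tau_0^2)^{-1/2}$ the polar isometry of $\Phi$, your intertwining relations give $\mathrm{Im}(u)\,\partial^*b=-i\Phi b$ and $\mathrm{Im}(u)\,\Phi b=i\partial^*(1-\tau_0^2)b$, so in the decomposition $\mathcal{N}\cong\partial^*\mathfrak{K}_0\oplus W\mathfrak{K}_0$ one has $\mathrm{Im}(u)=(1-\tau_0^2)^{1/2}\otimes\sigma_y$, whence $\varepsilon=1\otimes\sigma_y$ and $\mathcal{N}_+=\{\tfrac{1}{\sqrt2}(\partial^*b-iWb):b\in\mathfrak{K}_0\}$. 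The map $b\mapsto\tfrac{1}{\sqrt2}(\partial^*b-iWb)$ is then a unitary conjugating $\tau_0$ to $A$ directly. Either route completes the proof.
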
 
We now give different expressions for $m_\pm$ and $M_\pm$. 
\begin{corollary}
\label{cor:new}
Let $\gamma$, $\partial$, and $m_\pm$ be as stated in Theorem \ref{thm:smt}. 
\begin{align} 
& \label{eq:m+} 
	m_\pm = \dim \ker (\gamma \mp 1) \cap \ker (c - 1), \\
& \label{eq:M+} 
	M_\pm = \dim \ker (\gamma \pm 1) \cap \ker (c + 1), 	
\end{align}
where $c:=2\partial^* \partial - 1$.  
\end{corollary}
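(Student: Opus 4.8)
The plan is to unwind the definitions of $m_\pm$ and $c$ and match them against the stated formulas. Recall $m_\pm = \dim\ker(\tau\mp 1)$ with $\tau = \partial\gamma\partial^*$, and set $c = 2\partial^*\partial - 1$, so that $\partial^*\partial = (1+c)/2$ is the orthogonal projection onto $(\ker\partial)^\perp = \operatorname{Ran}\partial^*$ (using $\partial\partial^* = 1$). The key observation is that since $\partial$ is a coisometry, $\partial^*$ is an isometry, and for $\psi\in\mathfrak{K}$ the vectors $\psi$ and $\partial^*\psi$ carry the same norm; moreover $\partial$ restricted to $\operatorname{Ran}\partial^*$ is a unitary onto $\mathfrak{K}$ with inverse $\partial^*$. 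So I would first show that $\tau\psi = \psi$ if and only if $\partial^*\psi$ lies in $\ker(\gamma - 1)$, and similarly $\tau\psi = -\psi$ iff $\partial^*\psi\in\ker(\gamma+1)$.

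For the "only if" direction of the $+1$ case: suppose $\partial\gamma\partial^*\psi = \psi$. Then $\|\gamma\partial^*\psi\| = \|\partial^*\psi\|$ forces equality in $\|\partial(\gamma\partial^*\psi)\|\le \|\gamma\partial^*\psi\|$, which means $\gamma\partial^*\psi \in \operatorname{Ran}\partial^* = (\ker\partial)^\perp$; applying $\partial^*\partial$ (the projection onto that subspace) gives $\partial^*\partial\gamma\partial^*\psi = \gamma\partial^*\psi$, and then $\partial^*\psi = \partial^*(\partial\gamma\partial^*\psi) = \partial^*\partial\gamma\partial^*\psi = \gamma\partial^*\psi$, i.e. $\partial^*\psi\in\ker(\gamma-1)$. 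Since also $\partial^*\psi\in\operatorname{Ran}\partial^* = \ker(c-1)$, we get $\partial^*\psi\in\ker(\gamma-1)\cap\ker(c-1)$. The "if" direction is immediate: if $\varphi\in\ker(\gamma-1)\cap\ker(c-1)$ then $\varphi = \partial^*\partial\varphi = \partial^*\psi$ for $\psi := \partial\varphi$, and $\tau\psi = \partial\gamma\partial^*\psi = \partial\gamma\varphi = \partial\varphi = \psi$. The map $\psi\mapsto\partial^*\psi$ is thus a linear bijection (isometric, with inverse $\partial$) between $\ker(\tau - 1)$ and $\ker(\gamma-1)\cap\ker(c-1)$, proving \eqref{eq:m+} for the $+$ sign; the $-$ sign and the formula \eqref{eq:M+} for $M_\pm$ follow by replacing $\gamma$ with $-\gamma$ (note $\ker(c+1) = \ker\partial$, so $M_\pm = \dim\ker(\gamma\pm 1)\cap\ker\partial$ matches the definition in Theorem \ref{thm:smt}).

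I do not expect a serious obstacle here; the one point requiring care is the equality-in-the-triangle/contraction argument that lets one pass from $\tau\psi = \pm\psi$ to $\gamma\partial^*\psi\in\operatorname{Ran}\partial^*$ — one must invoke $\|\tau\|\le 1$ from part (i) of Theorem \ref{thm:smt} and the fact that $\partial$ is norm-nonincreasing with isometric adjoint, so that no norm is lost. Once that is in place, everything is bookkeeping. I would present the $m_+$ case in detail and then remark that $m_-$, $M_+$, $M_-$ are obtained by the substitutions $\gamma\leftrightarrow-\gamma$ and by recalling $\ker\partial = \ker(c+1)$.
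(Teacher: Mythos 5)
Your proof is correct and follows essentially the same route as the paper: you reduce \eqref{eq:M+} to the identity $\ker(c+1)=\ker\partial$, and you establish \eqref{eq:m+} by exhibiting $\partial^*$ as a bijection from $\ker(\tau\mp1)$ onto $\ker(\gamma\mp1)\cap\ker(c-1)$. The only (harmless) variation is in the nontrivial direction of that bijection: the paper deduces $(\gamma\mp1)\partial^*f=0$ from $\langle\partial^*f,\gamma\partial^*f\rangle=\pm\|\partial^*f\|^2$ together with the fact that $(1\pm\gamma)/2$ is a projection, whereas you deduce it from saturation of the contraction bound $\|\partial\eta\|\le\|\eta\|$, which forces $\gamma\partial^*f\in\operatorname{Ran}\partial^*$; both arguments are valid.
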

\begin{proof}
 Because $\ker \partial = \ker (c+1)$, 
 \eqref{eq:M+} is obtained from
 \begin{equation}
 \label{eq:birtexp}
 \ker (\gamma \pm 1) \cap \ker (c + 1)
 = \ker (\gamma \pm 1) \cap \ker \partial.
 \end{equation}  
 We prove \eqref{eq:m+}. 
 Because $\partial^*$ is a bijection 
 from $\ker (\tau \mp 1)$ to $\partial^* \ker (\tau \mp 1)$, 
 we need only prove 
 \begin{equation}
 \label{eq:inhexp} 
 \partial^* \ker (\tau \mp 1) 
 = \ker (\gamma \mp 1) \cap \ker (c-1). 
 \end{equation} 
 Because ${\rm Ran}(\partial^*) = \ker( c-1)$,
 \begin{align*} 
 \ker (\gamma \mp 1) \cap \ker (c-1)
 = \{ \partial^* f \mid \gamma \partial^* f = \pm \partial^* f\}.
 \end{align*}
 If $\gamma \partial^* f = \pm \partial^* f$, then
 $\tau f = \partial \gamma \partial^* f = \pm f$. 
 Hence $f \in \ker (\tau \mp 1)$. 
 Conversely, if $f \in \ker (\tau \mp 1)$, then  
 \[ \langle \partial^* f, S \partial^* f \rangle
 	=\langle f, Tf \rangle =  \pm \|f\|^2  = \pm \|\partial^* f\|^2. \]
 Subtracting the right-hand side from the left-hand side
 yields $\|(\gamma \mp 1)  \partial^* f\|^2 = 0$, because $(1 \pm \gamma)/2$
 is a projection. Hence $ \partial^* f \in \ker(\gamma \mp 1)$
 and \eqref{eq:inhexp} is proved. 
 This concludes the desired assertion.  
\end{proof}

To apply Theorem \ref{thm:smt} and Corollary \ref{cor:new} 
to the operator $U=\varGamma C$,
we will represent $C$ using a coisometry.  
Let $\{\chi_j \}_{j \in V}$ be a CONS of $\ker(C-1)$,
where $V$ is a countable set. 
We use $\mathcal{K}$ to denote the Hilbert space $\ell^2(V)$
of square summable functions on $V$.  
We introduce 
an operator $d: \mathcal{H} \to \mathcal{K}$ as follows.  
For $\psi \in \mathcal{H}$,
$d\psi \in \mathcal{K}$ is defined as a function on $V$ such that 
\[ (d\psi)(j)  = \langle \chi_j, \psi \rangle,
	\quad j \in V, \]
where  $\langle \cdot, \cdot \rangle$ on the right-hand side is 
the inner product on $\mathcal{H}$.  
The Bessel inequality guarantees the bondedness of $d$.
The following lemma is straightforward.
For a proof, the reader can consult \cite{HSS}.  
\begin{lemma}
\label{lem:coiso}
Let $d$ be as stated above. 
\begin{itemize}
\item[(i)] The operator $d$ is a coisometry, 
{\it i.e.}, $dd^* = 1$ on $\mathcal{K}$. 
\item[(ii)] The adjoint $d^*:\mathcal{K} \to \mathcal{H}$ is
an isometry and is given by 
\[ d^*f = \sum_{j \in V}f(j) \chi_j,
	\quad  \mbox{$f \in \mathcal{K}$}. \] 
\item[(iii)] $d^*d = \sum_{j \in V} |\psi_j\rangle \langle \psi_j|$
is the projection onto $\ker(C-1)$. 
\item[(iv)] $C = 2d^*d - 1$. 
\end{itemize}
\end{lemma}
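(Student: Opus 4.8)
The plan is to establish (ii) first, since the explicit formula for $d^*$ drives everything else, and then to read off (i), (iii), and (iv) in turn; the only genuine subtlety is bookkeeping with the countable (possibly infinite) index set $V$ and the convergence of $\sum_{j\in V}f(j)\chi_j$, which is handled by the $\ell^2$-condition exactly as the Bessel inequality handled the boundedness of $d$.

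First I would compute $d^*$ directly from the defining relation $\langle d^*f,\psi\rangle_{\mathcal H}=\langle f,d\psi\rangle_{\mathcal K}$. Writing the right-hand side as $\sum_{j\in V}\overline{f(j)}\langle\chi_j,\psi\rangle$ and noting that, because $\{\chi_j\}_{j\in V}$ is orthonormal and $f\in\ell^2(V)$, the series $\sum_{j\in V}f(j)\chi_j$ has Cauchy partial sums and hence converges in $\mathcal H$, one recognizes the right-hand side as $\bigl\langle\sum_{j\in V}f(j)\chi_j,\psi\bigr\rangle$; since $\psi$ is arbitrary this yields $d^*f=\sum_{j\in V}f(j)\chi_j$. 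Orthonormality then gives $\|d^*f\|^2=\sum_{j\in V}|f(j)|^2=\|f\|^2$, so $d^*$ is an isometry, which is (ii).

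Next, for (i) I would simply evaluate $dd^*$ on $f\in\mathcal K$: by definition $(dd^*f)(j)=\langle\chi_j,d^*f\rangle=\sum_{k\in V}f(k)\langle\chi_j,\chi_k\rangle=f(j)$ by orthonormality, so $dd^*=1$ on $\mathcal K$ and $d$ is a coisometry. For (iii), applying $d^*$ to $d\psi$ gives $d^*d\psi=\sum_{j\in V}\langle\chi_j,\psi\rangle\chi_j=\bigl(\sum_{j\in V}|\chi_j\rangle\langle\chi_j|\bigr)\psi$; because $\{\chi_j\}_{j\in V}$ is not merely orthonormal but a CONS of the closed subspace $\ker(C-1)$, this operator is precisely the orthogonal projection onto $\ker(C-1)$.

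Finally, (iv) follows from the spectral structure of $C$. By Remark \ref{rem:USI}, $C$ being a unitary involution forces $\sigma(C)\subseteq\{1,-1\}$ together with a decomposition $C=C_+-C_-$, where $C_\pm$ are the orthogonal projections onto $\ker(C\mp 1)$ and $C_++C_-=1$; hence $C=2C_+-1$, and since (iii) identifies $C_+$ with $d^*d$, we obtain $C=2d^*d-1$. I do not anticipate any real obstacle: the whole argument is a short chain of Hilbert-space computations, and the one point requiring a moment's care — convergence of the defining series for $d^*$ when $V$ is infinite — is exactly what square-summability of $f$ provides.
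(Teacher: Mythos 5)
Your proof is correct and is exactly the standard computation the paper has in mind (the paper omits the proof, calling the lemma straightforward and deferring to \cite{HSS}): derive the formula for $d^*$ from the adjoint relation using orthonormality and square-summability, then read off (i), (iii), and (iv) from the spectral decomposition $C = C_+ - C_-$ of the self-adjoint involution $C$. No gaps; you also implicitly correct the paper's typo $|\psi_j\rangle\langle\psi_j|$ to $|\chi_j\rangle\langle\chi_j|$ in (iii).
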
  
Because, from the above lemma, 
any chiral symmetric unitary operator $U$ can be written as
$U = \varGamma(2d^*d-1)$ with 
a unitary involution $\varGamma$ and a coisometry $d$,
Theorem \ref{thm:smt} is applicable for $U$. 
Let $m_\pm = \dim \ker (T \mp 1)$ 
and $M_\pm = \dim \mathcal{B}_\pm$,
where 
$T = d \varGamma d^*$ is called the discriminant of $U$
and $\mathcal{B}_\pm = \ker (\varGamma \pm 1) \cap \ker d$
is called the birth eigenspaces \cite{Se13, MOS}. 
From the proof of Corollary \ref{cor:new},
\begin{equation}
\label{eq:brt} 
\mathcal{B}_\pm = \ker(\varGamma \pm 1) \cap \ker (C + 1). 
\end{equation} 
In this paper, we introduce inherited eigenspaces 
$\mathcal{T}_\pm$ by
\begin{equation}
\label{eq:inh}
\mathcal{T}_\pm = \ker (\varGamma \mp 1) \cap \ker (C-1). 
\end{equation}
Corollary \ref{cor:new} says that
$m_\pm = \dim \mathcal{T}_\pm$. 
\begin{remark}
It is worthy noting that
the inherited eigenspaces and the birth eigenspaces 
can be represented as
\begin{align}
\label{eq:18:04}  
\mathcal{T}_\pm = d^* \ker (T \mp 1), 
\quad
\mathcal{B}_\pm = \ker( \varGamma \pm 1) \cap \ker d
\end{align} and
\begin{equation} 
\label{eq:egnpm1}
\ker (U \mp 1) = \mathcal{T}_\pm \oplus \mathcal{B}_\pm. 
\end{equation}
Here \eqref{eq:18:04} has already been proved 
in \eqref{eq:inhexp} and \eqref{eq:birtexp}.
For the proof of \eqref{eq:egnpm1}, the reader  can consult \cite{SS}. 
\eqref{eq:brt}, \eqref{eq:inh}, and \eqref{eq:egnpm1}
prove \eqref{eq:inds}.  

\end{remark}
In terms of the spectral mapping theorem,
we obtain the following index theorem. 
\begin{theorem}
\label{thm:ind}
Let $U$, $\varGamma$, 
$T=d \varGamma d^*$, $m_\pm$, and $M_\pm$ be as stated above. 
\begin{itemize}
\item[(i)] $(U, \varGamma)$ is a Fredholm pair if and only if $1-T^2$ is Fredholm and $M_\pm < \infty$.  
\item[(ii)] If $(U, \varGamma)$ is a Fredholm pair,
\begin{equation}
\label{eq:ind}
{\rm ind}_\varGamma(U)
=(M_- - m_-) - (M_+ - m_+). 
\end{equation}
In particular, 
\begin{equation}
\label{eq:lower}  
\dim \ker (U-1) + \dim \ker (U+1) 
	\geq |{\rm ind}_\varGamma(U)|, 
\end{equation}
where the equality holds if $m_-=0$ and $M_+=0$. 
\item[(iii)] If $(U, \varGamma)$ and $(U^\prime, \varGamma)$
are Fredholm pairs and $U^\prime - U$ is compact,
then
\[ {\rm ind}_\varGamma(U^\prime) =  {\rm ind}_\varGamma(U). \]
\end{itemize}
\end{theorem}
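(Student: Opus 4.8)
The plan is to reduce all three parts to one structural identity for the superhamiltonian $H=Q^2$. Since $Q={\rm Im}\,U$ and $U$ is unitary, the algebraic identity $({\rm Re}\,U)^2+({\rm Im}\,U)^2=1$ gives $H=1-R^2$ with $R={\rm Re}\,U$. Both $\ker(U^2-1)$ and its orthogonal complement reduce $U$, hence $R$ and $Q$; on $\ker(U^2-1)$ one has $U=U^*$, so $Q=0$ and $H=0$ there, while on $\ker(U^2-1)^\perp$ the spectral mapping theorem gives $U_0\cong e^{i\arccos T_0}\oplus e^{-i\arccos T_0}$, so taking real parts $R$ is unitarily equivalent to $T_0\oplus T_0$ and therefore $H\cong 0\oplus(1-T_0^2)\oplus(1-T_0^2)$. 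From this I would read off the two facts that drive everything: $\ker H=\ker(U^2-1)$, and the nonzero part of $\sigma(H)$ coincides with the nonzero part of $\sigma(1-T_0^2)$. Recall also from Remark~\ref{rem:ind} that $(U,\varGamma)$ is a Fredholm pair precisely when $H$ is Fredholm.

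For (i), I would combine $\dim\ker H=\dim\ker(U^2-1)=(m_++M_+)+(m_-+M_-)$ (Theorem~\ref{thm:smt}(iii)) with the remark that, since $T_0$ carries no eigenvalue $\pm1$, one has $\ker(1-T^2)=\ker(T^2-1)$ of dimension $m_++m_-$ and the nonzero part of $\sigma(1-T^2)$ equals the nonzero part of $\sigma(1-T_0^2)$, hence of $\sigma(H)$. Using the characterization of Fredholmness of a nonnegative self-adjoint operator ($\dim\ker<\infty$ together with a spectral gap above $0$, as in \eqref{eq:hfred}), I would conclude that $H$ is Fredholm exactly when $1-T^2$ is Fredholm and $M_\pm<\infty$.

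For (ii), assuming $(U,\varGamma)$ is a Fredholm pair, I would start from ${\rm ind}_\varGamma(U)={\rm index}(\alpha)=\dim\ker H_+-\dim\ker H_-$ and identify $\ker H_\pm=\ker(U^2-1)\cap\ker(\varGamma\mp1)$ (using that $Q$ anticommutes with $\varGamma$ and $\ker Q=\ker(U^2-1)$). Chiral symmetry \eqref{CS} makes $\ker(U-1)$ and $\ker(U+1)$ invariant under $\varGamma$, so I split each intersection along these two subspaces; then plugging in the decompositions $\ker(U\mp1)=\mathcal{T}_\pm\oplus\mathcal{B}_\pm$ together with the $\varGamma$-parities $\mathcal{T}_+,\mathcal{B}_-\subseteq\ker(\varGamma-1)$ and $\mathcal{T}_-,\mathcal{B}_+\subseteq\ker(\varGamma+1)$ (read off from \eqref{eq:egnpm1}, \eqref{eq:brt}, \eqref{eq:inh}) and the orthogonality of the two $\varGamma$-eigenspaces, I get $\dim\ker H_+=m_++M_-$ and $\dim\ker H_-=m_-+M_+$, which is \eqref{eq:ind}. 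The bound \eqref{eq:lower} then follows from $|a-b|\le a+b$ for the nonnegative integers $a=\dim\ker H_+$ and $b=\dim\ker H_-$, whose sum is $\dim\ker(U-1)+\dim\ker(U+1)$; equality requires $\min(a,b)=0$, and $m_-=M_+=0$ forces $b=0$.

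For (iii), if $U'-U$ is compact then $Q'-Q={\rm Im}\,U'-{\rm Im}\,U$ is compact, hence so is the difference $\alpha'-\alpha$ of the off-diagonal blocks taken with respect to the \emph{same} $\varGamma$; since $\alpha$ and $\alpha'$ are both Fredholm by hypothesis (Definition~\ref{def:ind}) and the Fredholm index is invariant under compact perturbations, ${\rm ind}_\varGamma(U')={\rm index}(\alpha')={\rm index}(\alpha)={\rm ind}_\varGamma(U)$. I expect (iii) to be routine and (ii) to be essentially bookkeeping with eigenspace decompositions already set up in Section~\ref{sec:ind}; the step that will need the most care is (i), namely separating cleanly the point spectrum of $T$ at $\pm1$ from the possible continuous spectrum of $T_0$ there, so that ``$1-T^2$ Fredholm'' (a condition purely on $\sigma(T)$) is matched exactly against the gap-plus-finite-kernel condition for $H$.
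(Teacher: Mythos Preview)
Your argument is correct, and for parts (i) and (iii) it is essentially the paper's own proof: the paper also derives the unitary equivalence $H\simeq (1-T^2)\oplus(1-T^2)\oplus 0\oplus 0$ (Proposition~\ref{pr:ht}) to reduce (i) to the gap-plus-finite-kernel criterion, and it handles (iii) by exactly your compact-perturbation argument on the off-diagonal block $\alpha$.

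For part (ii) you take a slightly shorter route. Both arguments start from $\ker\alpha=\ker Q\cap\ker(\varGamma-1)$ and $\ker Q=\ker(U^2-1)$ (the paper isolates these as Lemmas~\ref{lem:QU} and~\ref{lem:aqg}). The paper then identifies $\ker\alpha$ with $d^*\ker(T-1)\oplus\mathcal{B}_-$ by a hands-on computation (Lemma~\ref{lem:UC} and the proof of Proposition~\ref{pr:keralpha}), essentially re-deriving the relevant piece of the eigenspace decomposition inside the proof. You instead invoke the ready-made decomposition $\ker(U\mp1)=\mathcal{T}_\pm\oplus\mathcal{B}_\pm$ of \eqref{eq:egnpm1} and read off the $\varGamma$-parities directly from the definitions \eqref{eq:brt} and \eqref{eq:inh}, which immediately gives $\ker H_+=\mathcal{T}_+\oplus\mathcal{B}_-$ and $\ker H_-=\mathcal{T}_-\oplus\mathcal{B}_+$. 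Your version is more economical because it recycles the structural input already imported from \cite{SS}; the paper's version is more self-contained, as it does not rely on \eqref{eq:egnpm1} being established elsewhere. Your worry about (i) being the delicate step is somewhat misplaced: once $H\simeq (1-T_0^2)^{\oplus 2}\oplus 0$ is in hand, the matching of $\sigma(H)\setminus\{0\}$ with $\sigma(1-T^2)\setminus\{0\}$ is automatic since $1-T^2=(1-T_0^2)\oplus 0$, and no separate analysis of continuous spectrum near $\pm1$ is needed.
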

We postpone the proof  until the next section. 
In what follows, we give several corollaries of Theorem \ref{thm:ind}. 
\begin{corollary}
\label{cor:T<1}
Let $U$, $\varGamma$, $T$, and $M_\pm$ be as stated above. 
If $\|T\| < 1$ and $M_\pm < \infty$, 
then $(U,\varGamma)$ is a Fredholm pair,
$m_\pm=0$, and
\[ {\rm ind}_\varGamma (U) = M_- - M_+ \]
\end{corollary}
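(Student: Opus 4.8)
The plan is to deduce everything directly from Theorem \ref{thm:ind}, whose hypotheses are nearly satisfied by fiat. First I would record the spectral consequence of the norm bound: since $T = d\varGamma d^*$ is bounded and self-adjoint with $\|T\| < 1$, the spectrum $\sigma(T)$ is contained in the interval $[-\|T\|, \|T\|]$, which does not contain $\pm 1$. Hence $\ker(T \mp 1) = \{0\}$, i.e. $m_\pm = \dim\ker(T\mp 1) = 0$.

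Next I would verify the Fredholmness criterion of Theorem \ref{thm:ind}(i). By the spectral theorem for the self-adjoint operator $T$, the operator $1 - T^2$ satisfies $1 - T^2 \geq (1 - \|T\|^2)\,1 > 0$, so $1 - T^2$ is boundedly invertible and in particular Fredholm. Together with the hypothesis $M_\pm < \infty$, Theorem \ref{thm:ind}(i) then asserts that $(U,\varGamma)$ is a Fredholm pair.

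Finally, with $(U,\varGamma)$ a Fredholm pair, Theorem \ref{thm:ind}(ii) gives the index formula
\[
{\rm ind}_\varGamma(U) = (M_- - m_-) - (M_+ - m_+),
\]
and substituting $m_\pm = 0$ from the first step yields ${\rm ind}_\varGamma(U) = M_- - M_+$, as claimed.

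I do not anticipate a genuine obstacle here: the statement is a straightforward specialization of Theorem \ref{thm:ind}. The only point requiring a moment's care is the passage from the norm bound $\|T\| < 1$ to the two facts that $1 - T^2$ is Fredholm (indeed invertible) and that $m_\pm = 0$; both follow at once from the spectral theorem, so no estimate beyond $\sigma(T) \subseteq [-\|T\|,\|T\|]$ is needed.
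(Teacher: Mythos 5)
Your proposal is correct and follows essentially the same route as the paper: both arguments use the norm bound $\|T\|<1$ to conclude that $1-T^2$ is bounded below (hence Fredholm) and that $m_\pm=0$, and then invoke Theorem \ref{thm:ind} (i) and (ii). The only cosmetic difference is that the paper writes $\|T\|=1-\epsilon$ and bounds $1-T^2\geq\epsilon$, whereas you phrase the same estimate via $\sigma(T)\subseteq[-\|T\|,\|T\|]$.
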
  
\begin{proof}
By assumption, there exists a positive $\epsilon$ such that 
$\|T\| = 1-\epsilon$.  
This implies $1-T^2 \geq \epsilon$.  
Hence $1-T^2$ is Fredholm and $m_\pm = 0$. 
Applying Theorem \ref{thm:ind}, 
we have the desired assertion. 
\end{proof}
The following corollary indicates unitary invariance of the index. 
\begin{corollary}
Let $U$ and $\varGamma$ be as above. 
Let $V$ be a unitary operator from $\mathcal{H}$ 
to a Hilbert space $\mathcal{H}^\prime$
and set $U^\prime = VUV^{-1}$
and $\varGamma^\prime = V \varGamma V^{-1}$. 
If $(U, \varGamma)$ is a Fredholm pair,
then so is $(U^\prime, \varGamma^\prime)$ and
\begin{equation} 
\label{eq:inv}
{\rm ind}_{\varGamma^\prime} (U^\prime) 
	= {\rm ind}_\varGamma(U). 
\end{equation}
In particular, $(U^{-1},\varGamma)$ is a Fredholm pair and
\begin{equation} 
\label{eq:inverse}
{\rm ind}_\varGamma(U^{-1}) = {\rm ind}_\varGamma(U). 
\end{equation}
\end{corollary}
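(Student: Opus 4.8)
The plan is to reduce everything to the observation that conjugation by a unitary carries the whole supersymmetric data attached to $(U,\varGamma)$ to that attached to $(U',\varGamma')$ in a way that manifestly preserves Fredholmness and the Fredholm index of $\alpha$. First I would record the elementary facts: $\varGamma' = V\varGamma V^{-1}$ is again a unitary involution, and
\[ \varGamma' U'\varGamma' = V(\varGamma U\varGamma)V^{-1} = VU^{-1}V^{-1} = (U')^{-1}, \]
so that $U'$ obeys \eqref{CS} with $\varGamma'$ and ${\rm ind}_{\varGamma'}(U')$ is defined. Next, since $C = \varGamma U$ we get $C' := \varGamma' U' = VCV^{-1}$, hence $Q' = \frac{1}{2i}[\varGamma',C'] = VQV^{-1}$, and — the point that will do the real work — the spectral projections transform as $\varGamma'_\pm = (1\pm\varGamma')/2 = V\varGamma_\pm V^{-1}$, so that $V$ restricts to \emph{unitary} operators $V_\pm := V|_{{\rm Ran}\,\varGamma_\pm} : {\rm Ran}\,\varGamma_\pm \to {\rm Ran}\,\varGamma'_\pm$ (because $V$ maps $\ker(\varGamma\mp 1)$ onto $\ker(\varGamma'\mp 1)$).

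The key step is then to identify the off-diagonal block $\alpha' = \varGamma'_- Q'\varGamma'_+$ (viewed, as in Definition \ref{def:ind}, as a map ${\rm Ran}\,\varGamma'_+ \to {\rm Ran}\,\varGamma'_-$) with $V_-\,\alpha\,V_+^{*}$: for $\psi\in{\rm Ran}\,\varGamma'_+$ we have $V^{-1}\psi\in{\rm Ran}\,\varGamma_+$, whence $\alpha'\psi = V\varGamma_- Q\varGamma_+(V^{-1}\psi) = V_-\alpha V_+^{*}\psi$. Since $V_\pm$ are unitaries, pre- and post-composition change neither the dimensions of $\ker\alpha$ and $\ker\alpha^*$ nor the closedness of the range, so $\alpha'$ is Fredholm if and only if $\alpha$ is, and in that case ${\rm index}(\alpha') = {\rm index}(\alpha)$. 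By Definition \ref{def:ind} this gives at once that $(U',\varGamma')$ is a Fredholm pair whenever $(U,\varGamma)$ is, and ${\rm ind}_{\varGamma'}(U') = {\rm ind}_\varGamma(U)$, which is \eqref{eq:inv}. (One could instead invoke Remark \ref{rem:ind}: $H' = (Q')^2 = VHV^{-1}$, so $H'_\pm = VH_\pm V^{-1}$ are Fredholm with the same kernel dimensions, and ${\rm index}(\alpha') = \Delta(H') = \Delta(H) = {\rm index}(\alpha)$.)

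Finally I would obtain \eqref{eq:inverse} as the special case $\mathcal{H}' = \mathcal{H}$, $V = \varGamma$: since $\varGamma^2 = 1$ we have $\varGamma' = \varGamma\varGamma\varGamma^{-1} = \varGamma$, while $U' = \varGamma U\varGamma^{-1} = \varGamma U\varGamma = U^{-1}$ by \eqref{CS}; applying \eqref{eq:inv} then yields that $(U^{-1},\varGamma)$ is a Fredholm pair with ${\rm ind}_\varGamma(U^{-1}) = {\rm ind}_\varGamma(U)$.

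I do not anticipate a genuine obstacle. The only place that needs care is the bookkeeping in the second paragraph: one must keep $\alpha$ and $\alpha'$ firmly in mind as the restricted off-diagonal blocks ${\rm Ran}\,\varGamma_\pm$ (rather than operators on all of $\mathcal{H}$), so that ``conjugation by $V$'' is literally ``conjugation by the unitaries $V_\pm$ on the two summands of $\mathcal{H}$''; everything else is a routine use of the identities $\varGamma'=V\varGamma V^{-1}$, $C'=VCV^{-1}$, $Q'=VQV^{-1}$.
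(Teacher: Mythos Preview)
Your proof is correct and takes a genuinely different route from the paper's. You work directly from Definition~\ref{def:ind}: conjugation by $V$ carries $\varGamma$, $C$, $Q$, and the spectral projections $\varGamma_\pm$ to their primed counterparts, so the off-diagonal block satisfies $\alpha' = V_-\alpha V_+^{*}$ for the restricted unitaries $V_\pm$, and the invariance of the Fredholm index under unitary equivalence finishes the job. This is clean and elementary, requiring nothing beyond the definition of ${\rm ind}_\varGamma(U)$ and standard Fredholm theory.

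The paper instead goes through the index formula of Theorem~\ref{thm:ind}: it sets $d' = dV^{-1}$, checks that $C' = 2(d')^*d' - 1$, computes the discriminant $T' = d'\varGamma'(d')^* = T$ (so $m'_\pm = m_\pm$), verifies $\mathcal{B}'_\pm = V\mathcal{B}_\pm$ (so $M'_\pm = M_\pm$), and then applies \eqref{eq:ind} on both sides. This is heavier, since it leans on the spectral mapping machinery behind Theorem~\ref{thm:ind}, but it has the side benefit of showing explicitly that the individual invariants $m_\pm$ and $M_\pm$ are preserved under unitary conjugation, not just their combination in the index. Your argument bypasses this and is logically independent of Theorem~\ref{thm:ind}; both derivations of \eqref{eq:inverse} via the specialization $V=\varGamma$ are identical.
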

\begin{proof}
We first prove \eqref{eq:inv}. 
By Lemma \ref{lem:coiso}, $\varGamma U = C = 2d^*d-1$
and $d$ is a coisometry.  
Then $d^\prime := d V^{-1}$
is a coisometry from $\mathcal{H}^\prime$ to $\mathcal{K}$. 
Indeed, 
$d^\prime (d^\prime)^* = d V^{-1} \cdot V d^* = d d^* = 1$
on $\mathcal{K}$. 
Observe that $U^\prime = \varGamma^\prime C^\prime$
is a product of two unitary involutions
$\varGamma^\prime$ 
and $C^\prime := VCV^\prime$. 
Moerover,  
$C^\prime = V (2d^*d -1) V^{-1} = 2(d^\prime)^* d^\prime -1$.
The discriminant $T^\prime$ of $U^\prime$ is equal to $T$,
because it is given by
\[ T^\prime = d^\prime \varGamma (d^\prime)^*
= d V^{-1} \cdot V\varGamma V^{-1} \cdot V d^*
= d \varGamma d^* = T. \] 
Let $M_\pm$ be the dimension of
$\mathcal{B}_\pm^\prime = \ker(\varGamma^\prime \pm 1) \cap \ker d^\prime$
and $m_\pm^\prime = \ker(T^\prime \mp 1)$.
Since $T^\prime = T$, $m_\pm^\prime = m_\pm$. 
By definition, 
\begin{align*}
\mathcal{B}_\pm^\prime
& = \{ \psi \mid d^\prime \psi = 0, \ (\varGamma^\prime \pm 1) \psi = 0 \} \\
& = \{ \psi \mid d (V^{-1} \psi) = 0, \ (\varGamma \pm 1) (V^{-1} \psi) = 0 \} \\
& = V \mathcal{B}_\pm,
\end{align*} 
which implies $M_\pm^\prime = M_\pm$. 
Thus we find
from Theorem \ref{thm:ind}
that if $(U, \varGamma)$ is a Fredholm pair,
then so is $(U^\prime, \varGamma)$ and 
\begin{align*} 
{\rm ind}_{\varGamma^\prime}(U^\prime)
& = (M_-^\prime - m_-^\prime) - 	(M_+^\prime - m_+^\prime) \\
& = (M_- - m_-) - 	(M_+ - m_+)
	= {\rm ind}_\varGamma(U). 
\end{align*} 
Hence, \eqref{eq:inv} is proved.

We next prove \eqref{eq:inverse}. 
Let $V = \varGamma$. 
We obtain $U^\prime = \varGamma U \varGamma = U^{-1}$,
$\varGamma^\prime = \varGamma$, 
${\rm ind}_{\varGamma^\prime}(U^\prime) 
= {\rm ind}_\varGamma(U^{-1})$. 
Applying \eqref{eq:inv} yields \eqref{eq:inverse}. 
The proof is completed. 
 \end{proof}
\begin{remark}
In general, the Fredholm index satisfies
\[ {\rm Index}(A^*) = - {\rm Index}(A). \]
Hence, \eqref{eq:inverse} may seem strange. 
It should be noted that ${\rm ind}_\varGamma(U)$ 
(or equivalently the Witten index $\Delta(H)$) is defined 
through the Hamiltonian. 
Because the Hamiltonian $H$ for $U$ 
is equal to $H^\prime$ for $U^{-1}$,
\eqref{eq:inverse} is correct. 
Indeed, 
$H^\prime = ({\rm Im}(U^{-1}))^2 = ({\rm Im} U)^2 = H$. 
As seen below, even if $U$ and $U^\prime$ have the same Hamiltonian,
it is possible to have different indices. 
See also Remark \ref{rem:ind} and Example \ref{ex:4d}. 
\end{remark} 
Using \eqref{eq:brt} and \eqref{eq:inh},
we obtain the following.
\begin{corollary}
\label{cor:-U}
Let $U$ and $\varGamma$ be as stated above. 
If $(U,\varGamma)$ is a Fredholm pair, 
then so are $(-U,\varGamma)$ and $(U,-\varGamma)$ 
and
\[ {\rm ind}_\varGamma (-U) =  {\rm ind}_\varGamma(U),
	\quad
	{\rm ind}_{-\varGamma}(U) = - {\rm ind}_\varGamma(U). 
	\]
\end{corollary}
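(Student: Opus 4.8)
The plan is to reduce everything to the index formula \eqref{eq:ind} of Theorem \ref{thm:ind}, by tracking how the four numbers $m_\pm$ and $M_\pm$ transform when $U$ is replaced by $-U$ or $\varGamma$ by $-\varGamma$. Note first that in both cases the new pair still obeys chiral symmetry: for $(-U,\varGamma)$ we have $\varGamma(-U)\varGamma = -U^{-1} = (-U)^{-1}$, and for $(U,-\varGamma)$ we have $(-\varGamma)U(-\varGamma) = \varGamma U \varGamma = U^{-1}$, so Definition \ref{def:ind} and Theorem \ref{thm:ind} apply once Fredholmness is checked. For Fredholmness itself I would invoke part (i) of Theorem \ref{thm:ind} together with the fact that the relevant discriminants are essentially unchanged: since $\mathrm{Im}(-U) = -\mathrm{Im}(U) = -Q$, the superhamiltonian $H$ is the same for $-U$ as for $U$, so $(-U,\varGamma)$ is automatically a Fredholm pair by the argument in Proposition \ref{pr:fred} and Remark \ref{rem:ind}; and $(U,-\varGamma)$ is a Fredholm pair because the discriminant of $U=\varGamma C$ with respect to $-\varGamma$ is $d(-\varGamma)d^* = -T$, so $1-(-T)^2 = 1-T^2$ is Fredholm, while the birth spaces with respect to $-\varGamma$ are $\ker(-\varGamma\pm1)\cap\ker(C+1) = \ker(\varGamma\mp1)\cap\ker(C+1)$, which are just $\mathcal{B}_\mp$ and hence finite-dimensional.

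The core of the argument is the bookkeeping of \eqref{eq:brt} and \eqref{eq:inh}. For $(-U,\varGamma)$: writing $-U = \varGamma(-C)$ with $-C$ a unitary involution, the new $C$-operator is $-C$, so $\ker((-C)-1) = \ker(C+1)$ and $\ker((-C)+1) = \ker(C-1)$. Thus the new inherited spaces are $\mathcal{T}_\pm^{(-U)} = \ker(\varGamma\mp1)\cap\ker(C+1) = \mathcal{B}_\mp$ and the new birth spaces are $\mathcal{B}_\pm^{(-U)} = \ker(\varGamma\pm1)\cap\ker(C-1) = \mathcal{T}_\pm$. Hence $m_\pm^{(-U)} = M_\mp$ and $M_\pm^{(-U)} = m_\pm$, and plugging into \eqref{eq:ind} gives
\[
{\rm ind}_\varGamma(-U) = (m_+ - M_-) - (m_- - M_+) = (M_+ - m_+) - (M_- - m_-)\cdot(-1),
\]
which after rearranging equals $(M_- - m_-) - (M_+ - m_+) = {\rm ind}_\varGamma(U)$. (I would write this rearrangement carefully rather than as above; the point is that the roles of $+$ and $-$ get swapped together with the roles of $m$ and $M$, and the two swaps cancel.) For $(U,-\varGamma)$: the $C$-operator is unchanged, but $\ker(-\varGamma\pm1) = \ker(\varGamma\mp1)$, so $\mathcal{T}_\pm^{(U,-\varGamma)} = \ker(-\varGamma\mp1)\cap\ker(C-1) = \mathcal{T}_\mp$ and $\mathcal{B}_\pm^{(U,-\varGamma)} = \ker(-\varGamma\pm1)\cap\ker(C+1) = \mathcal{B}_\mp$. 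Therefore $m_\pm^{(U,-\varGamma)} = m_\mp$ and $M_\pm^{(U,-\varGamma)} = M_\mp$, and \eqref{eq:ind} gives ${\rm ind}_{-\varGamma}(U) = (M_+ - m_+) - (M_- - m_-) = -{\rm ind}_\varGamma(U)$.

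I expect the main obstacle to be purely notational: keeping the two independent sign flips — one in $C$ (equivalently in which eigenspace of $C$ is "the $+1$ one") and one in $\varGamma$ — from getting tangled, especially in the $-U$ case where the $+1$/$-1$ subscripts on $m$ and $M$ must be swapped simultaneously. There is no analytic content beyond what Theorem \ref{thm:ind} already provides; the identities $\ker(-X\pm1) = \ker(X\mp1)$ are immediate. An alternative, perhaps cleaner, route for the first identity is to use \eqref{eq:inverse}: since $-U = \varGamma U\varGamma\cdot\text{(something)}$ does not directly work, I would instead note ${\rm ind}_\varGamma(-U)$ via $H$ being unchanged only gives $|{\rm ind}|$ information, so the eigenspace bookkeeping above is the honest path. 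I would present the $(U,-\varGamma)$ case first since it is the shorter one, then do $(-U,\varGamma)$.
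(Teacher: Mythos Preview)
Your approach is exactly the paper's approach: establish Fredholmness via the $\varGamma$-independence of $H=({\rm Im}\,U)^2$, then track how the four spaces $\mathcal{T}_\pm,\mathcal{B}_\pm$ in \eqref{eq:brt}--\eqref{eq:inh} transform and plug into \eqref{eq:ind}. However, your execution contains precisely the sign tangles you warned yourself about.

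First, in the $(-U,\varGamma)$ case your identification $\mathcal{B}_\pm^{(-U)} = \ker(\varGamma\pm1)\cap\ker(C-1) = \mathcal{T}_\pm$ is wrong in the last step: by \eqref{eq:inh}, $\ker(\varGamma\pm1)\cap\ker(C-1) = \mathcal{T}_\mp$, not $\mathcal{T}_\pm$. (Your other identification $\mathcal{T}_\pm^{(-U)} = \mathcal{B}_\mp$ is correct.) With the corrected values $m_\pm^{(-U)}=M_\mp$ and $M_\pm^{(-U)}=m_\mp$, formula \eqref{eq:ind} gives $(m_+-M_+)-(m_--M_-)={\rm ind}_\varGamma(U)$ immediately and cleanly; your displayed line with the stray ``$\cdot(-1)$'' is a symptom of the subscript slip. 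If you carry your stated values $M_\pm^{(-U)}=m_\pm$ through honestly, you do \emph{not} get ${\rm ind}_\varGamma(U)$ in general.

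Second, in the $(U,-\varGamma)$ case your claim ``the $C$-operator is unchanged'' is false: the chiral decomposition attached to $-\varGamma$ is $U=(-\varGamma)(-C)$, so the coin becomes $-C$. Consequently the correct identifications (as in the paper) are $\mathcal{T}_\pm^{(U,-\varGamma)}=\mathcal{B}_\pm$ and $\mathcal{B}_\pm^{(U,-\varGamma)}=\mathcal{T}_\pm$, i.e.\ $m$ and $M$ swap while $\pm$ stays fixed---not the pure $\pm$-swap you wrote. Your final answer $-{\rm ind}_\varGamma(U)$ happens to come out right only because \eqref{eq:ind} is odd under \emph{each} of the two swaps separately; the reasoning itself is incorrect. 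For the same reason, your Fredholmness argument here (computing ``$d(-\varGamma)d^*=-T$'' with the old $d$) uses the wrong coisometry; simply invoke the $H$-argument as you did for $-U$.
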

\begin{proof}
Let $H$ be the Hamiltonian for $U$.  
Then the Hamiltonian for $-U$ is equal to $H$,
because $({\rm Im} (-U))^2 = ({\rm Im} (U))^2 = H$. 
Hence, from an argument similar to the proof of Proposition \ref{pr:fred},
if $(U, \varGamma)$ is a Fredholm pair, then so are 
$(-U,\varGamma)$ and $(U,-\varGamma)$.  

We now write $\mathcal{B}_\pm(U,\varGamma)$ 
and $\mathcal{T}_\pm(U,\varGamma)$ 
for the birth eigenspaces \eqref{eq:brt} 
and the inherited eigenspaces \eqref{eq:inh}
to make the dependence on $U$ and $\varGamma$ explicit. 
For the pair $(-U, \varGamma)$, $U$ is decomposed 
into $- U = \varGamma(-C)$,
\eqref{eq:brt} and \eqref{eq:inh} 
say that
\begin{align*}
& \mathcal{B}_\pm(-U, \varGamma)
= \ker (\varGamma \pm 1) \cap  \ker (-C + 1)
= \mathcal{T}_\mp(U,\varGamma), \\
& \mathcal{T}_\pm(-U, \varGamma)
= \ker (\varGamma \mp 1) \cap  \ker (-C - 1)
= \mathcal{B}_\mp(U,\varGamma),
\end{align*}
which, combined with \eqref{eq:ind}, imply that
\begin{align*}
{\rm ind}_\varGamma(-U)
&= (\dim \mathcal{T}_+(U,\varGamma)
	- \dim \mathcal{B}_+(U,\varGamma)) \\
& \quad -  (\dim \mathcal{T}_-(U,\varGamma)
	- \dim \mathcal{B}_-(U,\varGamma))  \\
& = {\rm ind}_\varGamma (U).	
\end{align*}
Similarly, $U = (-\varGamma)(-C)$ implies that
\begin{align*}
\mathcal{B}_\pm(U, -\varGamma)
& = \ker (- \varGamma \pm 1) \cap  \ker (-C + 1) \\
& = \ker (\varGamma \mp 1) \cap  \ker (C - 1)
= \mathcal{T}_\pm(U,\varGamma), \\
\mathcal{T}_\pm(U, -\varGamma) 
& = \ker (- \varGamma \mp 1) \cap  \ker (-C - 1) \\
& = \ker (\varGamma \pm 1) \cap  \ker (C + 1) 
= \mathcal{B}_\pm(U,\varGamma),
\end{align*}
which yields 
\begin{align*}
{\rm ind}_{-\varGamma}(U)
&= (\dim \mathcal{T}_-(U,\varGamma)
	- \dim \mathcal{B}_-(U,\varGamma)) \\
& \quad -  (\dim \mathcal{T}_+(U,\varGamma)
	- \dim \mathcal{B}_+(U,\varGamma))  \\
& = - {\rm ind}_\varGamma (U).	
\end{align*}
This completes the proof. 
\end{proof}
We conclude this section
with $\mathcal{H}$ finite dimensional. 
\begin{corollary}
\label{cor:fin}
Let $U$, $\varGamma$, $T$, $m_\pm$, and $M_\pm$
be as stated in Theorem \ref{thm:ind} 
and suppose that $\dim \mathcal{H} < \infty$. 
Then $(U,\varGamma)$ is a Fredholm pair 
and \eqref{eq:ind} holds. 
Moreover, if $\dim \ker(\varGamma+1) = \dim \ker(\varGamma-1)$,
then
\[ {\rm ind}_\varGamma (U) = 0. \]
\end{corollary}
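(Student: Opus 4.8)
The plan is to use that in finite dimensions every operator is automatically Fredholm and that the Fredholm index of a linear map between finite-dimensional spaces equals the difference of the dimensions of its domain and its codomain.

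First I would recall the set-up preceding Definition~\ref{def:ind}: with respect to the decomposition $\mathcal{H} = {\rm Ran}\,\varGamma_+ \oplus {\rm Ran}\,\varGamma_-$, the supercharge $Q$ has the block form \eqref{rep.Q}, so that $\alpha = \varGamma_- Q \varGamma_+$ is a linear map from ${\rm Ran}\,\varGamma_+ = \ker(\varGamma-1)$ to ${\rm Ran}\,\varGamma_- = \ker(\varGamma+1)$. When $\dim\mathcal{H} < \infty$, both of these subspaces are finite-dimensional, hence $\dim\ker\alpha < \infty$, $\dim\ker\alpha^* < \infty$, and ${\rm Ran}(\alpha)$ is closed; by Definition~\ref{def:ind}, $(U,\varGamma)$ is a Fredholm pair. (Alternatively one can verify the criterion of Theorem~\ref{thm:ind}(i) directly: $1-T^2$ acts on the finite-dimensional space $\mathcal{K} = \ell^2(V)$, and $M_\pm \leq \dim\mathcal{H} < \infty$.) Consequently Theorem~\ref{thm:ind}(ii) applies verbatim and \eqref{eq:ind} holds.

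For the last assertion, by \eqref{eq.ind} we have ${\rm ind}_\varGamma(U) = {\rm index}(\alpha) = \dim\ker\alpha - \dim\ker\alpha^*$. Applying the rank--nullity theorem to $\alpha$ and to $\alpha^*$, and using that a finite matrix and its adjoint have equal rank, I would obtain
\[ {\rm ind}_\varGamma(U) = \dim{\rm Ran}\,\varGamma_+ - \dim{\rm Ran}\,\varGamma_- = \dim\ker(\varGamma-1) - \dim\ker(\varGamma+1), \]
where the last equality is because $\varGamma_\pm$ is the projection onto $\ker(\varGamma \mp 1)$. Thus the hypothesis $\dim\ker(\varGamma+1) = \dim\ker(\varGamma-1)$ forces the right-hand side to vanish, i.e. ${\rm ind}_\varGamma(U) = 0$.

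I do not expect a genuine obstacle here: the statement is a direct specialization of Theorem~\ref{thm:ind} combined with the elementary identity ${\rm index}(\alpha) = \dim(\mathrm{dom}\,\alpha) - \dim(\mathrm{codom}\,\alpha)$ valid in finite dimensions. The only point worth stating carefully is the identification of $\mathrm{dom}\,\alpha$ and $\mathrm{codom}\,\alpha$ with $\ker(\varGamma-1)$ and $\ker(\varGamma+1)$, which is already encoded in the block decomposition \eqref{rep.Q}; everything else is bookkeeping.
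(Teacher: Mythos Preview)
Your proof is correct and follows essentially the same route as the paper: both arguments note that $\alpha$ is automatically Fredholm in finite dimensions, invoke Theorem~\ref{thm:ind}(ii) to obtain \eqref{eq:ind}, and then use the rank--nullity identity for $\alpha$ (the paper phrases this as ``$\alpha$ is a square matrix of order $n$, hence $\dim\ker\alpha=\dim\ker\alpha^*$''). Your version is slightly more informative in that it records the general finite-dimensional formula ${\rm ind}_\varGamma(U)=\dim\ker(\varGamma-1)-\dim\ker(\varGamma+1)$ before specializing, but the underlying argument is the same.
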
  
\begin{proof}
Since $\dim \mathcal{H} < \infty$,
$\alpha$ is automatically Fredholm. 
Hence, from Theorem \ref{thm:ind},
\eqref{eq:ind} holds. 
If $n=\dim \ker(\varGamma+1) = \dim \ker(\varGamma-1)$,
then $\alpha$ is viewed as a square matrix of order $n$
and hence $\dim \ker \alpha =  \dim \ker \alpha^*$. 
By definition, ${\rm ind}_\varGamma(U) = \ker \alpha - \ker \alpha^*=0$. 
This completes the proof.
\end{proof}

 \section{Proof of the index formula} 
In this section, we prove Theorem \ref{thm:ind}.
Throughout this section,
we assume that  a unitary operator $U$ and 
a unitary involution $\varGamma$
satisfy  $\varGamma U \varGamma = U^{-1}$.    
We use the notations in Sections \ref{sec:CSSS} and \ref{sec:ind}. 
We prove (i)-(ii) of Theorem \ref{thm:ind} in Subsections \ref{subsec:fred}-%
\ref{subsec:inv}. 
 \subsection{Fredholmness}
 \label{subsec:fred}
In this subsection,
we prove (i) of Theorem \ref{thm:ind},
{\it i.e.}, $(U, \varGamma)$ is a Fredholm pair 
if and only if $1-T^2$ is Fredholm and $M_\pm < \infty$. 
\begin{proof}[Proof of Theorem \ref{thm:ind} (i)]
By Remark \ref{rem:ind}, 
$(U, \varGamma)$ is a Fredholm pair if and only if $H$ is Fredholm.
Thus  we find that the following proposition concludes the proof. 
\end{proof} 
\begin{proposition}
\label{pr:ht}
The following are equivalent.
\begin{itemize}
\item[(i)] $H$ is Fredholm.
\item[(ii)] $1-T^2$ is Fredholm and $M_\pm < \infty$. 
\end{itemize}
\end{proposition}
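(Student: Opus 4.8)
The plan is to show that both of the stated conditions are equivalent to one and the same concrete statement about the discriminant, namely: $m_\pm<\infty$, $M_\pm<\infty$, and $\|T_0\|<1$, where $T_0=T|_{\ker(T^2-1)^\perp}$. Two elementary observations come first. Since $H=({\rm Im}\,U)^2$, a vector $\psi$ lies in $\ker H$ iff $U\psi=U^*\psi$, which forces $U^2\psi=\psi$; conversely $U^2\psi=\psi$ gives $U\psi=U^{-1}\psi=U^*\psi$. Hence $\ker H=\ker(U^2-1)=\ker(U-1)\oplus\ker(U+1)$, and by Theorem \ref{thm:smt}(iii) this space has dimension $(m_++M_+)+(m_-+M_-)$. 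Second, I recall (as in \eqref{eq:hfred}; see \cite{ASS1,Thaller}) that a bounded nonnegative self-adjoint operator $A$ is Fredholm iff $\dim\ker A<\infty$ and $\inf(\sigma(A)\setminus\{0\})>0$, and that an isolated point of the spectrum of a self-adjoint operator is necessarily an eigenvalue.

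Next I would compute $\sigma(H)$ via the spectral mapping theorem. Writing $\mathcal{L}=\ker(U^2-1)^\perp$, the unitary equivalence $U|_{\mathcal{L}}\cong e^{i\arccos T_0}\oplus e^{-i\arccos T_0}$ recorded after \eqref{specU} (see \cite{HSS,SS,FFSd}) gives ${\rm Im}\,(U|_{\mathcal{L}})\cong\sqrt{1-T_0^2}\oplus(-\sqrt{1-T_0^2})$, hence $H|_{\mathcal{L}}\cong(1-T_0^2)\oplus(1-T_0^2)$, while $H=0$ on $\ker(U^2-1)$. Because $T_0$ has, by construction, no eigenvalue $\pm1$, one checks that $\ker(1-T_0^2)=\{0\}$, so if $0\in\sigma(1-T_0^2)$ it is a non-isolated point of that spectrum. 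Consequently $\sigma(H)\setminus\{0\}=\sigma(1-T_0^2)\setminus\{0\}$, and $\inf(\sigma(H)\setminus\{0\})>0$ holds iff $0\notin\sigma(1-T_0^2)$, i.e. iff $\|T_0\|<1$ (recall $\sigma(T_0)\subseteq[-1,1]$). Combining with the first paragraph, $H$ is Fredholm iff $m_\pm,M_\pm<\infty$ and $\|T_0\|<1$.

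The same reasoning applied to $1-T^2$ finishes the argument. From the orthogonal decomposition $T=1_{\ker(T-1)}\oplus(-1)_{\ker(T+1)}\oplus T_0$ one reads off $\dim\ker(1-T^2)=m_++m_-$ and $\sigma(1-T^2)\setminus\{0\}=\sigma(1-T_0^2)\setminus\{0\}$, so $1-T^2$ is Fredholm iff $m_\pm<\infty$ and $\|T_0\|<1$. Comparing the two characterizations gives exactly: $H$ is Fredholm iff $1-T^2$ is Fredholm and $M_\pm<\infty$, which is the proposition.

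I expect the only delicate point to be the spectral bookkeeping near $\pm1$: one must carefully distinguish $\sigma(T)$ from $\sigma(T_0)$ and notice that the relevant condition is $\|T_0\|<1$ (equivalently $\pm1\notin\sigma_{\rm ess}(T)$), not $\|T\|<1$, since $\pm1$ may well be isolated eigenvalues of $T$ of finite multiplicity — these contribute to $m_\pm$ but do not obstruct Fredholmness. Everything else is routine bookkeeping, using Theorem \ref{thm:smt}(iii) for the dimension count and the standard Fredholm criterion \eqref{eq:hfred} for the spectral gap.
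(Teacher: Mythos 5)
Your proof is correct and follows essentially the same route as the paper: both rest on the unitary equivalence $U\simeq e^{i\arccos T_0}\oplus e^{-i\arccos T_0}\oplus 1\oplus(-1)$ from \cite{SS,HSS}, giving $H\simeq(1-T_0^2)\oplus(1-T_0^2)\oplus 0\oplus 0$, together with the dimension count $\dim\ker H=m_++M_++m_-+M_-$ and the spectral-gap criterion \eqref{eq:hfred}. Your explicit intermediate characterization ($m_\pm,M_\pm<\infty$ and $\|T_0\|<1$) and the careful distinction between $T$ and $T_0$ are a slightly more pedantic packaging of the paper's identities \eqref{eq:htspec} and \eqref{eq:htker}, not a different argument.
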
 
\begin{proof}
By \cite[Theorem 4.1 and Proposition 5.3]{SS},  
$\mathcal{H}$ is unitarily equivalent to
$\ker(1-T^2)^\perp \oplus \ker(1-T^2)^\perp \oplus \ker (1 - U)
\oplus \ker (1 + U)$
and with this identification
\[ U \simeq e^{i \arccos T} 
	\oplus e^{-i \arccos T} \oplus 1 \oplus (-1). \]
Because $e^{\pm \arccos T} = T \pm \sqrt{1-T^2}$,
\[ Q = {\rm Im} U 
	\simeq \sqrt{1-T^2} \oplus  \sqrt{1-T^2} \oplus 0 \oplus 0 \]
and hence
\[ H = Q^2 \simeq 1-T^2 \oplus  1-T^2 \oplus 0 \oplus 0. \]
Therefore, 
\begin{equation}
\label{eq:htspec} \inf \sigma (H) \setminus \{0\}
	= \inf \sigma (1-T^2) \setminus \{0\}. 
\end{equation}
Because, by Theorem \ref{thm:smt},
$\dim \ker (1\mp U) 
= m_\pm + M_\pm$
and by definition, $m_+ + m_- =\ker (1 - T^2) $,
\begin{align}
\dim \ker H 
& = \dim \ker (1-U) \oplus  \ker (1+U) 
\notag \\
& = m_+   + M_+ + m_- + M_-
\notag \\
& = \dim \ker (1-T^2) + M_+ + M_-.
\label{eq:htker}
\end{align}
\eqref{eq:htspec} and \eqref{eq:htker}
conclude the desired assertion. 
\end{proof}
 \subsection{$\ker \alpha$ and $\ker \alpha^*$}
 \label{subsec:ind}
 In this section, we prove Theorem \ref{thm:ind} (ii),
 {\it i.e.}, if $(U,\varGamma)$ is a Fredholm pair, then
 \[ {\rm ind}_\varGamma(U) =(M_- - m_-) - (M_+ - m_+)
 	\tag{\ref{eq:ind}}.
 \] 
 \eqref{eq:lower} can be easily proved by the above equation. 
 \begin{proof}[Proof of Theorem \ref{thm:ind} (ii) ]
 Because the right-hand side of \eqref{eq:ind} is
 \[ 
  (m_+ + M_-) - (m_- + M_+)
  = (\dim \ker (1- T) + \dim \mathcal{B}_-)
 	- (\dim \ker (1+ T) + \dim \mathcal{B}_+), \]
it suffices to  prove
 \begin{align*}
 & \dim \ker \alpha = \dim \ker (1- T) + \dim \mathcal{B}_-, \\
 & \dim \ker \alpha^* = \dim \ker (1+ T) + \dim \mathcal{B}_+. 
 \end{align*}
 Because $d^*$ is a bijection 
 from $\dim \ker (1- T) = \dim d^* \ker (1- T)$,
 the following proposition prove the desired assertion.
 \end{proof}
 \begin{proposition}
 \label{pr:keralpha}
 \begin{itemize}
 \item[(i)] $\ker \alpha = d^* \ker (1- T) \oplus \mathcal{B}_-$.
 \item[(ii)] $\ker \alpha^* = d^* \ker (1 + T) \oplus \mathcal{B}_+$. 
 \end{itemize} 
 \end{proposition}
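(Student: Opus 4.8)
The plan is to reduce both identities to an intersection-of-kernels computation, exactly in the style of the proof of Corollary~\ref{cor:new}. First I would observe that the projector $\varGamma_-$ in the definition $\alpha = \varGamma_- Q \varGamma_+$ is automatic on ${\rm Ran}\,\varGamma_+ = \ker(\varGamma-1)$: by $\{\varGamma,Q\}=0$ (Lemma~\ref{lem:comm}(2)), for $\psi\in\ker(\varGamma-1)$ one has $\varGamma(Q\psi) = -Q\varGamma\psi = -Q\psi$, so $Q\psi\in\ker(\varGamma+1) = {\rm Ran}\,\varGamma_-$ and hence $\alpha\psi = Q\psi$. Consequently $\ker\alpha = \ker Q\cap\ker(\varGamma-1)$, and by the same argument applied to $\alpha^* = \varGamma_+ Q\varGamma_-$ one gets $\ker\alpha^* = \ker Q\cap\ker(\varGamma+1)$.

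Next I would compute $Q$ on a $\varGamma$-eigenspace directly from $Q = \tfrac{1}{2i}[\varGamma,C]$: if $\varGamma\psi = \pm\psi$ then $Q\psi = \tfrac{1}{2i}(\varGamma C\psi \mp C\psi) = \tfrac{1}{2i}(\varGamma\mp1)C\psi$, so the condition $Q\psi = 0$ is precisely $C\psi\in\ker(\varGamma\mp1)$. Now decompose $\psi = \psi_0 + \psi_1$ with $\psi_0 = d^*d\psi\in\ker(C-1) = {\rm Ran}\,d^*$ and $\psi_1 = (1-d^*d)\psi\in\ker(C+1) = \ker d$ (using Lemma~\ref{lem:coiso}(iii),(iv)). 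Then $C\psi = \psi_0 - \psi_1$, and the two requirements $\psi_0+\psi_1\in\ker(\varGamma\mp1)$ and $\psi_0-\psi_1\in\ker(\varGamma\mp1)$ together force (add and subtract) $\psi_0,\psi_1\in\ker(\varGamma\mp1)$; conversely if both components lie in $\ker(\varGamma\mp1)$ then so do $\psi$ and $C\psi$, whence $Q\psi = 0$. Thus $\ker Q\cap\ker(\varGamma\mp1) = (\ker(\varGamma\mp1)\cap\ker(C-1))\oplus(\ker(\varGamma\mp1)\cap\ker d)$, the sum being orthogonal because $\ker(C-1)\perp\ker(C+1)$.

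Finally I would match the two pieces with the eigenspaces of \eqref{eq:brt} and \eqref{eq:inh}: for the upper sign this reads $\ker\alpha = \mathcal{T}_+\oplus\mathcal{B}_-$ and for the lower sign $\ker\alpha^* = \mathcal{T}_-\oplus\mathcal{B}_+$, and then \eqref{eq:18:04} (equivalently \eqref{eq:inhexp}) rewrites $\mathcal{T}_+ = d^*\ker(1-T)$ and $\mathcal{T}_- = d^*\ker(1+T)$, giving exactly (i) and (ii).

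There is no substantive obstacle; the one thing to watch is the sign bookkeeping — keeping straight that $\ker(\varGamma-1)$ pairs with $\mathcal{T}_+$ but with $\mathcal{B}_-$ (and vice versa for $\ker(\varGamma+1)$), and double-checking that the $\varGamma_\mp$ in the definition of $\alpha,\alpha^*$ is genuinely redundant rather than cutting down the kernel. Everything else is the short kernel-intersection manipulation already used in Corollary~\ref{cor:new}.
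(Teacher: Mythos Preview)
Your argument is correct. Both you and the paper first reduce to $\ker\alpha = \ker Q\cap\ker(\varGamma-1)$ (the paper isolates this as Lemma~\ref{lem:aqg}), but from there the routes diverge. The paper passes through two further lemmas: it identifies $\ker Q$ with $\ker(1-U^2)$ (Lemma~\ref{lem:QU}) and then shows that for $\varphi\in\ker\alpha$ the operator $U$ acts as $\pm1$ on the pieces $C_\pm\varphi$ (Lemma~\ref{lem:UC}), finally using the action of $U$ and $\varGamma$ on these pieces to land in $d^*\ker(T-1)$ and $\mathcal{B}_-$. Your approach stays entirely on the $\varGamma$--$C$ side: the observation that $Q\psi=\tfrac{1}{2i}(\varGamma\mp1)C\psi$ on $\ker(\varGamma\mp1)$ turns the kernel condition into ``$\psi$ and $C\psi$ lie in the same $\varGamma$-eigenspace,'' from which the $C$-eigenspace decomposition immediately splits $\psi$ into its $\mathcal{T}_\pm$ and $\mathcal{B}_\mp$ parts. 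This bypasses Lemmas~\ref{lem:QU} and~\ref{lem:UC} altogether and gives a shorter, more symmetric proof; the paper's route, on the other hand, makes the link to the $\pm1$-eigenspaces of $U$ (and hence to \eqref{eq:egnpm1}) more visible along the way.
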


 The proof of Proposition \ref{pr:keralpha} splits into several lemmas. 
 \begin{lemma}
 \label{lem:QU}
 $\ker Q = \ker (1- U^2)$. 
 \end{lemma}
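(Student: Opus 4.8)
The plan is to prove the identity $\ker Q = \ker(1-U^2)$ by a direct operator manipulation, using the fact that $Q = {\rm Im}\,U = (U-U^*)/2i$ and that $U$ is unitary, so $U^* = U^{-1}$. First I would rewrite the condition $Q\psi = 0$: since $2iQ = U - U^{-1}$, we have $Q\psi = 0$ if and only if $U\psi = U^{-1}\psi$, which (applying the unitary $U$ to both sides, an invertible operation) is equivalent to $U^2\psi = \psi$, i.e. $(1-U^2)\psi = 0$. This already gives the set equality at the level of vectors, with no analytic subtlety, because $U$ being unitary means multiplying by $U$ is a bijection on $\mathcal{H}$.

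Concretely, the key steps in order are: (1) recall $Q = {\rm Im}\,U$ from the proof of Lemma~\ref{lem:comm}; (2) observe $2i\,Q = U - U^* = U - U^{-1}$ since $U$ is unitary; (3) for $\psi \in \mathcal{H}$, note $Q\psi = 0 \iff (U - U^{-1})\psi = 0 \iff U\psi = U^{-1}\psi$; (4) apply $U$ to both sides to get $U^2\psi = \psi$, and conversely from $U^2\psi = \psi$ apply $U^{-1}$ to recover $U\psi = U^{-1}\psi$; (5) conclude $\ker Q = \ker(1-U^2)$. Each implication in step (4) is reversible precisely because $U$ is invertible.

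I do not anticipate a genuine obstacle here — the statement is essentially a reformulation of unitarity. The only point requiring the smallest bit of care is making explicit that $U^* = U^{-1}$ (so that $Q$ can be written in terms of $U$ and $U^{-1}$ alone, avoiding the adjoint) and that the passage between $U\psi = U^{-1}\psi$ and $U^2\psi = \psi$ is an equivalence rather than merely an implication; this is immediate from $U$ being a bijection. One could equivalently phrase the whole argument as: $2iQ = U^{-1}(U^2 - 1)$, and since $U^{-1}$ is injective, $\ker Q = \ker(U^2-1) = \ker(1-U^2)$, which is perhaps the cleanest presentation and is the form I would likely use in the writeup.

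This lemma is the natural first building block toward Proposition~\ref{pr:keralpha}: combined with the block structure $Q = \begin{pmatrix} 0 & \alpha^* \\ \alpha & 0\end{pmatrix}$ with respect to $\mathcal{H} = {\rm Ran}\,\varGamma_+ \oplus {\rm Ran}\,\varGamma_-$, it identifies $\ker\alpha \oplus \ker\alpha^* = \ker Q = \ker(1-U^2)$, and then one uses $\ker(1-U^2) = \ker(U-1) \oplus \ker(U+1)$ together with the spectral-mapping decompositions $\ker(U\mp 1) = \mathcal{T}_\pm \oplus \mathcal{B}_\pm$ and the compatibility of these summands with $\varGamma_\pm$ to read off which pieces land in $\ker\alpha$ versus $\ker\alpha^*$.
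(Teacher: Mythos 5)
Your proof is correct and follows essentially the same route as the paper: the paper's argument amounts to the identity $1-U^2=-U(U-U^*)=-2i\,UQ$, which is exactly your factorization $2iQ=U^{-1}(U^2-1)$ read in the other direction, with injectivity of the unitary factor giving equality of kernels. Your one-line version is, if anything, a cleaner write-up of the same idea.
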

 \begin{proof}
 Supposing $\varphi \in \ker(1- U^2)$,
 we have $(U - U^*) \varphi = 0$ and hence
 $(1- U^2)\varphi = - U (U- U^*) \varphi = 0$. 
 Therefore, $\varphi \in \ker (1 - U^2)$. 
 
 Conversely, suppose that $\varphi \in \ker (1-U^2)$. 
 Then $U^2 \varphi = \varphi$ and hence $U \varphi = U^* \varphi$.
 Hence, $Q \varphi = (U- U^*)\varphi/ 2i = 0$. 
 This completes the proof.
 \end{proof}
 \begin{lemma}
 \label{lem:aqg}
 \begin{itemize}
 \item[(i)] $\ker \alpha = \ker Q \cap \ker(\varGamma - 1)$
 \item[(ii)] $\ker \alpha^* = \ker Q \cap \ker(\varGamma + 1)$
 \end{itemize}
 \end{lemma}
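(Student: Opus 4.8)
The plan is to work entirely within the $2\times 2$ block decomposition of $Q$ with respect to $\mathcal{H} = {\rm Ran}\,\varGamma_+ \oplus {\rm Ran}\,\varGamma_-$, in which $Q = \left(\begin{smallmatrix} 0 & \alpha^* \\ \alpha & 0 \end{smallmatrix}\right)$. First I would observe that for $\psi = \psi_+ \oplus \psi_-$ we have $Q\psi = (\alpha^*\psi_-)\oplus(\alpha\psi_+)$, so $\psi \in \ker Q$ if and only if $\alpha\psi_+ = 0$ and $\alpha^*\psi_- = 0$. Intersecting with $\ker(\varGamma - 1) = {\rm Ran}\,\varGamma_+$ (i.e. $\psi_- = 0$) kills the second condition and leaves exactly $\alpha\psi_+ = 0$; under the identification of ${\rm Ran}\,\varGamma_+$ with the domain of $\alpha$, this says $\psi \in \ker\alpha$. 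This proves (i), and (ii) follows symmetrically by intersecting with $\ker(\varGamma+1) = {\rm Ran}\,\varGamma_-$, where the condition becomes $\alpha^*\psi_- = 0$, i.e. $\psi \in \ker\alpha^*$.

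The only point needing a word of care is the identification between $\ker\alpha \subseteq {\rm Ran}\,\varGamma_+$ as an abstract subspace and $\ker Q \cap \ker(\varGamma-1)$ as a subspace of $\mathcal{H}$: these coincide under the canonical inclusion ${\rm Ran}\,\varGamma_+ \hookrightarrow \mathcal{H}$, since $\varGamma_+$ is an orthogonal projection and the matrix representation of $Q$ was set up precisely relative to this orthogonal splitting. I would make this explicit by writing $\psi = \varGamma_+\psi + \varGamma_-\psi$ and noting $\varGamma\psi = \psi \iff \varGamma_-\psi = 0$.

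There is no real obstacle here — the statement is essentially a direct unwinding of the block form \eqref{rep.Q} of $Q$ together with the spectral decomposition $\varGamma = \varGamma_+ - \varGamma_-$. The lemma is the elementary bridge between the supersymmetric bookkeeping (kernels of $\alpha$, $\alpha^*$) and the operator-theoretic description of $\ker Q = \ker(1 - U^2)$ from Lemma \ref{lem:QU}; combined with a description of $\ker(1-U^2) \cap \ker(\varGamma \mp 1)$ in terms of $\mathcal{T}_\pm$, $\mathcal{B}_\pm$, and the spectral mapping theorem, it will feed directly into the proof of Proposition \ref{pr:keralpha}.
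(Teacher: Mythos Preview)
Your proposal is correct and takes essentially the same approach as the paper: both arguments exploit the off-diagonal block structure of $Q$ with respect to $\mathcal{H} = {\rm Ran}\,\varGamma_+ \oplus {\rm Ran}\,\varGamma_-$. The paper phrases the forward inclusion via the anticommutation relation $\{\varGamma,Q\}=0$ (deducing $Q(1+\varGamma)\varphi=0$ from $\varGamma_-Q\varphi=0$), whereas you read the same fact directly off the matrix representation \eqref{rep.Q}; these are the same computation in different notation.
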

 \begin{proof}
 Since $\alpha = \varGamma_- Q \varGamma_+$ is an operator
 from ${\rm Ran} \varGamma_+$ to ${\rm Ran} \varGamma_-$,
 \[ \ker \alpha 
 	= \{ \varphi \in {\rm Ran} \varGamma_+ 
 	\mid \varGamma_- Q \varphi = 0 \}. \]
 Supposing that $\varphi \in \ker \alpha$,
 we have $(1- \varGamma) Q \varphi = 0$. 
 Because, by Lemma \ref{lem:comm}, 
 $Q$ anticommutes with $\varGamma$,  
 we obtain $Q \varphi = \varGamma Q \varphi =  - Q\varGamma \varphi$. 
 Hence, $Q(1+\varGamma) \varphi = 0$. 
 Because $\varphi \in {\rm Ran} \varGamma_+$,
 $Q \varphi = 0$.
 Thus, we see that $\varphi \in \ker Q \cap \ker (\varGamma - 1)$. 
 
 Conversely, suppose that $\varphi \in \ker Q \cap \ker (\varGamma - 1)$. 
 Then $\varphi \in {\rm Ran}( \varGamma_+)$ and 
 $Q \varphi = 0$. Hence, $\varphi \ker \alpha$. 
 Therefore (i) is proved.
 The same proof works for (ii).  
 \end{proof}
 \begin{lemma}
 \label{lem:UC}
 Let $C_\pm = (1 \pm C)/2$.  
 For any $\varphi \in \ker \alpha$,
 \begin{equation*}
 \label{eq:UP1} U C_\pm \varphi = \pm C_\pm \varphi.
 \end{equation*}
 \end{lemma}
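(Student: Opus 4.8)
The statement to prove is Lemma \ref{lem:UC}: for $\varphi \in \ker\alpha$, we have $U C_\pm \varphi = \pm C_\pm\varphi$, where $C_\pm = (1\pm C)/2$. The natural route is to combine Lemma \ref{lem:aqg}(i), which identifies $\ker\alpha = \ker Q \cap \ker(\varGamma - 1)$, with Lemma \ref{lem:QU}, which says $\ker Q = \ker(1-U^2)$. So a vector $\varphi \in \ker\alpha$ satisfies $\varGamma\varphi = \varphi$ and $U^2\varphi = \varphi$. From $U^2\varphi = \varphi$ we get $U\varphi = U^*\varphi$, i.e.\ $U\varphi = U^{-1}\varphi$.

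First I would exploit chiral symmetry together with $\varGamma\varphi = \varphi$. Since $\varGamma U \varGamma = U^{-1}$ and $C = \varGamma U$, we have $C\varphi = \varGamma U \varphi$. I want to show $\varphi$ is "split" by $C$ into pieces on which $U$ acts as $\pm 1$. The key computation: apply $U$ to $C_\pm\varphi = \tfrac12(\varphi \pm C\varphi) = \tfrac12(\varphi \pm \varGamma U\varphi)$. Then $U C_\pm\varphi = \tfrac12(U\varphi \pm U\varGamma U\varphi)$. Using $U\varGamma = \varGamma U^{-1}$ (equivalent to \eqref{CS}) and $U^{-1}U\varphi = \varphi$, the second term becomes $\pm\varGamma\varphi = \pm\varphi$ (using $\varGamma\varphi=\varphi$). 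Hence $UC_\pm\varphi = \tfrac12(U\varphi \pm \varphi)$. Separately, I note $C U \varphi$: since $C = \varGamma U$ and $U\varphi = U^{-1}\varphi$, one finds $C U\varphi = \varGamma U^2\varphi = \varGamma\varphi = \varphi$, and also $C\varphi = \varGamma U\varphi$, so $C(U\varphi) = \varphi$ means $U\varphi = C\varphi$ (using $C^2=1$). Therefore $U\varphi = C\varphi = 2C_\pm\varphi \mp (1 - (\pm 1))\cdots$ — more cleanly, $U\varphi = C\varphi = C_+\varphi - C_-\varphi$ and $\varphi = C_+\varphi + C_-\varphi$, so $UC_\pm\varphi = \tfrac12(U\varphi\pm\varphi) = \tfrac12\big((C_+\varphi - C_-\varphi)\pm(C_+\varphi+C_-\varphi)\big)$, which gives $C_+\varphi$ for the upper sign and $-C_-\varphi$ for the lower, i.e.\ $UC_\pm\varphi = \pm C_\pm\varphi$.

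So the skeleton is: (1) invoke Lemmas \ref{lem:aqg} and \ref{lem:QU} to get $\varGamma\varphi = \varphi$ and $U^2\varphi = \varphi$; (2) deduce $U\varphi = C\varphi$ from $C = \varGamma U$, $U\varphi = U^{-1}\varphi$, and $\varGamma\varphi = \varphi$ (checking $CU\varphi = \varGamma U^2\varphi = \varGamma\varphi = \varphi$); (3) expand $UC_\pm\varphi$ using $U\varphi = C\varphi$ and the orthogonal decomposition $\varphi = C_+\varphi + C_-\varphi$ to land on $\pm C_\pm\varphi$.

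**Main obstacle.** There is no deep difficulty here — it is an algebraic identity — but the one place to be careful is the bookkeeping of the relations $\varGamma U\varGamma = U^{-1}$, $C = \varGamma U$, $\varGamma^2 = C^2 = 1$, and their consequence $U\varGamma = \varGamma U^{-1}$; a sign error in manipulating these (especially when $U^{-1}$ is replaced by $U^*$ and back) would propagate. It is worth writing the chain $CU\varphi = \varGamma U \cdot U\varphi = \varGamma U^2\varphi = \varGamma\varphi = \varphi$ explicitly and then, since $C$ is an involution, concluding $U\varphi = C(CU\varphi) = C\varphi$; from there the rest is a two-line expansion.
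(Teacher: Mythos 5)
Your proposal is correct and follows essentially the same route as the paper: both reduce the claim via Lemmas \ref{lem:aqg} and \ref{lem:QU} to $\varGamma\varphi=\varphi$ and $U^2\varphi=\varphi$, deduce $U\varphi=C\varphi$ and $UC\varphi=\varphi$, and then expand $UC_\pm\varphi$. The only cosmetic difference is that you obtain $U\varphi=C\varphi$ by computing $CU\varphi=\varGamma U^2\varphi=\varphi$ and using $C^2=1$, whereas the paper writes $U\varphi=U^*\varphi=C\varGamma\varphi=C\varphi$ directly.
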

 \begin{proof}
 Observe that $C_\pm$ is the projection onto $\ker(C \mp 1)$.
 Let $\varphi \in \ker \alpha$. 
 By Lemmas \ref{lem:QU} and \ref{lem:aqg},
 $\varphi$ belongs to $\ker (1-U^2)$ and $\ker ( \varGamma -1 )$. 
 Hence, $U^2 \varphi = \varphi$ and 
 \begin{equation}
 \label{eq:52}
  U\varphi = U^* \varphi = C \varGamma \varphi = C \varphi. 
 \end{equation}
 By \eqref{eq:52}, 
 \begin{equation}
 \label{eq:53}
 UC \varphi = U^2 \varphi = \varphi.
 \end{equation}
 By \eqref{eq:52} and \eqref{eq:53},
 \[ U \left(\frac{1 \pm C}{2} \right) \varphi
 	= \frac{C \pm 1}{2} \varphi, \]
 which proves the lemma. 
 \end{proof}
 We now prove Proposition  \ref{pr:keralpha},
 using Lemmas \ref{lem:QU}, \ref{lem:aqg}, and \ref{lem:UC}. 
 \begin{proof}[Proof of Proposition \ref{pr:keralpha}]
 Suppose that $\varphi \in \ker \alpha$. 
 With the decomposition 
 $\mathcal{H} = {\rm Ran} d^* \oplus \ker d$,
 we can write 
 \[ \varphi = d^* f + \varphi_0, \]
 where $f \in \mathcal{K}$, $\varphi_0 \in \ker d$. 
 Since $C_\pm$ is the projection onto $\ker (C \mp 1)$
 Lemma \ref{lem:coiso} says that $C_+ = d^* d$. 
 Hence, ${\rm Ran} d^* = \ker (C - 1)$ and 
 ${\ker} d = \ker (C + 1)$. 
 Because $d$ is a coisometry,
 \begin{align}
 & C_+ \varphi = d^* f, \label{3:02} \\
 & C_- \varphi = \varphi_0.  \label{3:03}
 \end{align}
 By \eqref{3:02},
 \begin{align*}
 \varGamma d^*f = \varGamma C_+ \varphi
 	= \varGamma C C_+ \varphi
 	= U C_+ \varphi = C_+ \varphi = d^* f,
 \end{align*}
 where we have used Lemma \ref{lem:UC} 
 in the second last equality. 
 Hence, 
 \[ Tf = d (\varGamma d^* f) = d (d^* f) = f \] 
 and therefore $f \in \ker (T-1)$. 
 Similarly, by \eqref{3:03},
 \begin{equation}
  \label{3:10}
 U \varphi_0 = U C_- \varphi = - C_- \varphi = - \varphi_0,
 \end{equation}
where we have used Lemma \ref{lem:UC} again
 in the second last equality. 
 Because $\varphi_0 \in \ker (C+1)$, 
 \begin{equation} 
 \label{3:11}
 U\varphi_0 = \varGamma C \varphi_0 = - \varGamma \varphi_0.
 \end{equation} 
 By \eqref{3:10} and \eqref{3:11},
 $\varGamma \varphi_0 = \varphi_0$.
 Hence, $\varphi_0 \in \ker (\varGamma -1) \cap \ker d = \mathcal{B}_-$.
 Thus we see that $\varphi \in d^* \ker (T=1) \oplus \mathcal{B}_-$. 
 
 Conversely, supposing that 
 $\varphi \in d^* \ker(T-1) \oplus \mathcal{B}_-$,
 we can write
 \begin{equation}
 \label{3:23} 
 \varphi = d^*f + \varphi_0, 
 \end{equation}
 where $f \in \ker (T-1)$ and $\varphi_0 \in \mathcal{B}_-$.
 We now claim that 
 \begin{equation}
 \label{3:25}
 d^*f \in \ker (\varGamma-1).
 \end{equation}
 Indeed, 
 an argument similar to the proof of Corollary \ref{cor:new} 
 yields 
 \eqref{3:25}. 
 Since $\varphi_0 \in \mathcal{B}_- \subset \ker (\varGamma -1)$,
 \eqref{3:23} and \eqref{3:25} imply $\varphi \in \ker(\varGamma-1)$. 
 We next prove that $\varphi \in \ker Q$. 
 Combining \eqref{3:25} and (iii) of Lemma \ref{lem:coiso}, 
 we have
 \[ 2i Q d^* f = (U - U^*) d^* f
 	= \varGamma d^*f - C \varGamma d^*f
 	= (1- C) d^*f =  2 C_- d^* f = 0. \]
 Hence, $d^* f \in \ker Q$.
 Similarly, using $\varphi_0 \in \ker ( C+1)$, we have
 \[ 2i Q\varphi_0 = (U - U^*) \varphi_0 
 	=  - \varGamma \varphi_0 - C \varGamma \varphi_0
 	= - C_+ \varphi_0 = 0. \]
 Hence, $\varphi_0 \in \ker Q$.
 Thus we see that $\varphi = d^* f + \varphi_0 \in \ker Q$. 
 Summarizing, we have $\varphi \in \ker Q \cap \ker (\varGamma -1)$.
 By Lemma \ref{lem:aqg}, we obtain $\varphi \in \ker \alpha$.  
 Thus (i) is proved.
 
 A similar proof works for (ii).
 \end{proof}
 
\subsection{Topological invariance}
\label{subsec:inv}
In this subsection, we prove Theorem \ref{thm:ind} (iii),
{\it i.e.}, if $(U, \varGamma)$ and $(U^\prime, \varGamma)$
are Fredholm pairs and $U^\prime - U$ is compact,
then ${\rm ind}_\varGamma(U^\prime) =  {\rm ind}_\varGamma(U)$.
\begin{proof}[Proof of Theorem \ref{thm:ind} (iii)]
Let $C_1 = \varGamma U$ and $C_2 = \varGamma U^\prime$.
By assumption,$C_i$ is written as $C_i = 2 P_i - 1$
with the projection onto $\ker(C_i -1)$ ($i=1,2$)
and  $2(P_1 - P_2) = C_1 - C_2 = \varGamma (U - U^\prime)$
is compact. 
Because supercharges $Q_i$ for $U_i:= \varGamma C_i$  ($i=1,2$)
are $Q_i = [\varGamma, P_i]/i$,
\begin{align*}
Q_1 - Q_2
& = \frac{1}{i} [\varGamma, P_1 - P_2] 
\end{align*} 
is compact. 
Let $\alpha_i = \varGamma_- Q_i \varGamma_+$. 
Because $\alpha_1 - \alpha_2 
= \varGamma_- (Q_1- Q_2) \varGamma_+$
is compact, the Fredholm index ${\rm index} (\alpha_1)$
is equal to ${\rm index} (\alpha_2)$. 
By definition, this means that 
${\rm ind}_\varGamma (U) = {\rm ind}_\varGamma (U^\prime)$. 
This completes the proof. 
\end{proof}
\begin{remark}
Similarly to the above proof,
we can relax the condition of Theorem \ref{thm:ind} (iii). 
Indeed, we can prove the following.
Suppose that $\varGamma U$ and $\varGamma U^\prime$ 
are unitary involutions and $U-U^\prime$ is compact.  
If $(U, \varGamma)$ is a Fredholm pair,
then so is  $(U^\prime, \varGamma)$ and 
${\rm ind}_\varGamma (U) = {\rm ind}_\varGamma (U^\prime)$. 
\end{remark}
\section{Examples}
\label{sec:ex}
Based on the supersymmetric structure discussed above, 
we will call a quantum walk with a chiral symmetric evolution 
a supersymmetric quantum walk (SUSYQW). 
After considering a finite dimensional toy model
in Subsection \ref{subsec:toy},
we present SUSYQWs.  
In Subsection \ref{subsec:GA}, 
we give an application to Grover's algorithm,
which is viewed as a SUSYQW. 
In Subsection  \ref{subsec:GW} 
we treat the Grover walk on a graph. 
\subsection{finite dimensional toy models}
\label{subsec:toy}
In this subsection,
we demonstrate how to calculate
the index ${\rm ind}_\varGamma (U)$ 
for a finite dimensional toy model,
which reveals that the index depends on the choice of $\varGamma$. 
For the finite dimensional case, 
Corollary \ref{cor:fin} says that 
the index ${\rm ind}_\varGamma (U)$ 
is given by the formula \eqref{eq:ind} for every pair $(U,\varGamma)$ 
of a unitary $U$ and a unitary involution $\varGamma$ obeying \eqref{GC}.
\begin{example}[Two dimensional case]
Fix $\beta \in \mathbb{R}$
and set 
\[ U = 
\begin{pmatrix}  e^{i \beta} & 0 \\ 
	0 & e^{-i \beta} \end{pmatrix}. \]
For $\gamma, c \in \mathbb{R}$
with $\beta = \gamma- c$,
it follows that $U = \varGamma C$, where 
\[ \varGamma 
= \begin{pmatrix} 0 & e^{i \gamma} \\ 
	e^{-i \gamma} & 0 \end{pmatrix},
\quad 
C
= \begin{pmatrix} 0 & e^{i c} \\ 
	e^{-i c} & 0 \end{pmatrix}.
\] 
Because $\varGamma$ and $C$ are unitary involutions,
$(U,\varGamma)$ becomes a Fredholm pair
for every $\gamma \in \mathbb{R}$
with $c = \gamma- \beta$.
Thus we find that there are
infinitely many choices of $\varGamma$ 
such that 	$(U,\varGamma)$ is a Fredholm pair. 
In this case, 
Corollary \ref{cor:fin} says that
${\rm ind}_\varGamma(U) = 0$,
because $\ker (\varGamma \pm 1) = 1$.

We next study 
the birth eigenspaces $\mathcal{B}_\pm$ 
and the inherited eigenspaces $\mathcal{T}_\pm$. 
Since ${\rm ind}_\varGamma(U) = 0$,
\[ M_+- M_- = m_+ - m_-. \]
Combining this with $\dim \ker(U\mp 1)= M_\pm +m_\pm$,
we can conclude the following assertion.
\begin{itemize} 
\item If $\beta \in \pi \mathbb{Z}$,
then either of the following two holds:\\
(i) $M_+ = m_+ = 1$, 
{\it i.e.}, $\mathcal{B}_- = \mathcal{T}_- = \{0\}$; \\
(ii) $M_- = m_- = 1$,
{\it i.e.}, $\mathcal{B}_+ = \mathcal{T}_+ = \{0\}$.  
\item Otherwise, $\ker (U - 1) = \ker (U+1) =\{0\}$ 
and hence
$M_+ = M_- = m_+ = m_- = 0$, 
{\it i.e.}, $\mathcal{B}_+ = \mathcal{B}_- 
= \mathcal{T}_+ = \mathcal{T}_- = \{0\}$. 
\end{itemize} 
The above assertions can be checked directly. 
Indeed, 
$\ker(\varGamma \mp 1) = {\rm span}\{ v_{\pm 1}(\gamma) \}$
and $\ker(C \mp 1) = {\rm span}\{ v_{\pm 1}(c) \}$,
where $v_{\pm 1}(\theta) = \begin{pmatrix} \pm e^{i \theta} \\ 1 \end{pmatrix}$ and 
$\langle v_j (\gamma), v_k(c) \rangle = 1+  j k e^{-i \beta}$
with $j, k = \pm 1$.  
For instance, in the case of $j=-1$, $k=+1$,
this implies  
$\mathcal{B}_+ = \ker (\varGamma + 1) \cap \ker (C - 1)
=\{0\}$ if $\beta \in 2 \pi \mathbb{Z}$. 
\end{example}
 \begin{example}[Four dimensional case] 
 \label{ex:4d}
 Let $\mathcal{H} = \mathbb{C}^4$
 and consider
 \[ U = \begin{pmatrix} 
 	1 & 0 & 0 & 0 \\
 	0 & 1 & 0 & 0 \\
 	0 & 0 & 1 & 0 \\
 	0 & 0 & 0 & -1
 	\end{pmatrix}. \] 
 The following table indicates
 $m_\pm$, $M_\pm$, and $I:={\rm ind}_\varGamma(U)$
 for several pairs $(\varGamma, C)$ of two unitary involutions 
 such that $U = \varGamma C$.
 \end{example}
 \begin{center}
\begin{tabular}{|c|c||c|c|c|c|c|}
\hline
$\varGamma$
& $C$
& $M_+$
& $M_-$  
& $m_+$
& $m_-$
& $I
$ \\ \hline \hline
$-1$
& 
$-U$
& $3$
& $0$
& $0$
& $1$
& $-4$
\\ \hline
$\begin{pmatrix} 
 	1 & 0 & 0 & 0 \\
 	0 & -1 & 0 & 0 \\
 	0 & 0 & -1 & 0 \\
 	0 & 0 & 0 & -1
 	\end{pmatrix}$
& $\begin{pmatrix} 
 	1 & 0 & 0 & 0 \\
 	0 & -1 & 0 & 0 \\
 	0 & 0 & -1 & 0 \\
 	0 & 0 & 0 & 1
 	\end{pmatrix}$
& $2$
& $0$  
& $1$
& $1$
& $-2$ \\ \hline 
$\begin{pmatrix} 
 	1 & 0 & 0 & 0 \\
 	0 & 1 & 0 & 0 \\
 	0 & 0 & -1 & 0 \\
 	0 & 0 & 0 & -1
 	\end{pmatrix}$
& $\begin{pmatrix} 
 	1 & 0 & 0 & 0 \\
 	0 & 1 & 0 & 0 \\
 	0 & 0 & -1 & 0 \\
 	0 & 0 & 0 & 1
 	\end{pmatrix}$
& $1$
& $0$  
& $2$
& $1$
& $0$ \\ \hline 
$\begin{pmatrix} 
 	1 & 0 & 0 & 0 \\
 	0 & -1 & 0 & 0 \\
 	0 & 0 & 1 & 0 \\
 	0 & 0 & 0 & 1
 	\end{pmatrix}$
& $\begin{pmatrix} 
 	1 & 0 & 0 & 0 \\
 	0 & -1 & 0 & 0 \\
 	0 & 0 & 1 & 0 \\
 	0 & 0 & 0 & -1
 	\end{pmatrix}$
& $1$
& $1$  
& $2$
& $0$
& $2$ \\ \hline 
$1$
& 
$U$
& $0$
& $1$  
& $3$
& $0$
& $4$ \\ \hline
\end{tabular}
\end{center}
\subsection{Grover's  search algorithm }
\label{subsec:GA}
Grover's searching algorithm \cite{Gr} 
consists of operators acting on the Hilbert space 
$\mathcal{H} 
= (\mathbb{C}^2)^{\otimes n} \otimes \mathbb{C}^2$,
where $(\mathbb{C}^2)^{\otimes n}$ 
describes $n$-qubit states 
and the oracle operator acts on $\mathbb{C}^2$. 
Let $N =2^n$ and $V = \{0, 1, \cdots, N-1\}$. 
We use $|x \rangle$ to denote
$|j_0 \rangle \otimes  \cdots \otimes |j_{n-1}\rangle 
\in (\mathbb{C}^2)^{\otimes n}$
where $\{|j \rangle \}_{j=0,1}$ 
is the standard basis of $\mathbb{C}^2$ and
 $j_i \in \{0,1 \}$ ($i=0,\cdots, n-1$) are the 2-adic digits,
{\it i.e.},  the 2-adic expansion of $x$
is given by $\sum_{i = 0}^{n-1} j_i 2^{i}$. 
It is useful to identify 
$(\mathbb{C}^2)^{\otimes n}$
with the Hilbert space $\ell^2(V)$ of functions on $V$,
in which case 
$|x\rangle$ is identified with a function $\delta_x$,
{\it i.e.}, $\delta_x(y) = 1$ if $y=x$ and $\delta_x(y) =0$ otherwise. 
With this identification,  
we write 
\[ \mathcal{H} = \ell^2(V) \otimes \mathbb{C}^2 \]
and consider the ONB $\{|x\rangle \otimes |\star \rangle \mid x \in V,  \star = \pm \}$ of $\mathcal{H}$,
where we use $|\pm \rangle$ to denote 
vectors $(|0\rangle \pm |1 \rangle)/\sqrt{2} \in \mathbb{C}^2$. 

We now introduce two operators on $\mathcal{H}$ known  as
the oracle operator and the diffusion operator. 
For a fixed $x_0 \in V$, 
we set $|\chi_0 \rangle = |x_0\rangle \otimes |-\rangle$.
The oracle operator is defined as
\[ C = 1 - 2|\chi_0 \rangle \langle \chi_0|. \]
Let $|\phi_0 \rangle = \sum_{x \in V}|x \rangle/\sqrt{N} \in \ell^2(V)$
and set $D_0 = 2|\phi_0 \rangle \langle \phi_0|-1$. 
The diffusion operator $\varGamma$ is
defined as 
\[ \varGamma = D_0 \otimes 1.  \]
Let $U = \varGamma C$. 
In Grover's algorithm,
after transforming the state $\Psi_0  = |\phi_0 \rangle \otimes |-\rangle$
by $U^t$,
we detect $x_0$ with a probability 
\[ p_t(x_0) = 
\|(|x\rangle \langle x|\otimes 1)U^t \Psi_0)\|_{\mathcal{H}}^2.
\] 
This is viewed as the probability of  finding  a quantum walker on $V$
at a position $x_0 \in V$. 
In this case, $U^t \Psi_0$ is the state of the walker at time $t$
when $\Psi_0$ is the initial state.   
From this viewpoint, 
$U = \varGamma C$ is the evolution operator of
a SUSYQW,
because $\varGamma$ and $C$ are unitary involutions,
as is easily verified.
 
In what follows, we calculate the index of $U$. 
\begin{theorem}
\label{thm:Gr}
Let $U$ and $\varGamma$ be stated as above. 
Then
\begin{equation}
\label{3:28} 
{\rm ind}_\varGamma(U) = 4-2N. 
\end{equation}	
Moreover
\begin{equation}
\label{3:30} 
\sigma(U) 
=\{e^{i \arccos (1 - 2/N)}, e^{-i \arccos (1 - 2/N)},1,-1\}.
\end{equation}
\end{theorem}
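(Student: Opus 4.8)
The plan is to exploit the orthogonal splitting of $\mathcal{H} = \ell^2(V)\otimes\mathbb{C}^2$ coming from the basis $|+\rangle, |-\rangle$ of the coin space: write $\mathcal{H} = \mathcal{H}_+ \oplus \mathcal{H}_-$ with $\mathcal{H}_\pm = \ell^2(V)\otimes|\pm\rangle$. Since $\varGamma = D_0\otimes 1$ acts trivially on the coin factor and $|\chi_0\rangle = |x_0\rangle\otimes|-\rangle$ lies in $\mathcal{H}_-$, both $\varGamma$ and $C$ leave $\mathcal{H}_\pm$ invariant, hence so does $U = \varGamma C$. On $\mathcal{H}_+$ the oracle $C$ restricts to the identity (because $\langle -|+\rangle = 0$), so $U|_{\mathcal{H}_+}$ acts as $D_0$ on the $\ell^2(V)$ factor and as the identity on the coin; on $\mathcal{H}_-$ the oracle restricts to $1 - 2|x_0\rangle\langle x_0|$, so $U|_{\mathcal{H}_-}$ acts as the ordinary Grover iteration $G := D_0(1 - 2|x_0\rangle\langle x_0|)$ on $\ell^2(V)$.

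I would then read off the spectrum from this decomposition. The operator $D_0 = 2|\phi_0\rangle\langle\phi_0| - 1$ has eigenvalue $1$ (simple, eigenvector $\phi_0$) and eigenvalue $-1$ (multiplicity $N-1$), so $U|_{\mathcal{H}_+}$ contributes $\{1,-1\}$. For $G$, note $G$ preserves the two-dimensional plane $W = {\rm span}\{\phi_0, |x_0\rangle\}$ (here $N\geq 2$ forces $\phi_0$ and $|x_0\rangle$ to be linearly independent) and its orthogonal complement, on which $G$ acts as $-1$ (multiplicity $N-2$). Writing $G|_W$ as a $2\times 2$ matrix in the orthonormal basis $e_1 = |x_0\rangle$, $e_2 = (\phi_0 - (1/\sqrt{N})|x_0\rangle)/\sqrt{1 - 1/N}$ of $W$, a direct computation gives trace $2(1 - 2/N)$ and determinant $1$, so its eigenvalues satisfy $\lambda^2 - 2(1-2/N)\lambda + 1 = 0$, i.e. $\lambda = e^{\pm i\arccos(1-2/N)}$; since $0 \le 1 - 2/N < 1$ for $N\geq 2$, these are non-real and distinct from $\pm 1$. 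Assembling the three pieces gives \eqref{3:30}.

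For the index I would invoke the formula \eqref{eq:ind} of Theorem \ref{thm:ind}(ii), which is available because $\dim\mathcal{H} < \infty$ makes $(U,\varGamma)$ a Fredholm pair by Corollary \ref{cor:fin}. The crucial observation is that the birth eigenspaces vanish: $\ker(C+1) = \mathbb{C}|\chi_0\rangle$, whereas $\ker(\varGamma-1) = {\rm span}\{\phi_0\}\otimes\mathbb{C}^2$ and $\ker(\varGamma+1) = \{\phi_0\}^\perp\otimes\mathbb{C}^2$, and $|\chi_0\rangle$ belongs to neither — the first because $|x_0\rangle$ is not a scalar multiple of $\phi_0$, the second because $\langle \phi_0 | x_0\rangle = 1/\sqrt{N}\neq 0$. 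Hence $M_+ = M_- = 0$, so by \eqref{eq:egnpm1} one has $m_\pm = \dim\ker(U\mp 1)$; the spectral analysis above gives $\dim\ker(U-1) = 1$ and $\dim\ker(U+1) = (N-1)+(N-2) = 2N-3$, i.e. $m_+ = 1$, $m_- = 2N-3$. Substituting into \eqref{eq:ind} yields ${\rm ind}_\varGamma(U) = (M_- - m_-) - (M_+ - m_+) = -(2N-3)+1 = 4-2N$, which is \eqref{3:28}. (Equivalently one may compute $m_\pm$ straight from Corollary \ref{cor:new}: $\mathcal{T}_+ = \mathbb{C}(\phi_0\otimes|+\rangle)$ is one-dimensional and $\mathcal{T}_- = \{\phi_0\otimes|0\rangle, \phi_0\otimes|1\rangle, \chi_0\}^\perp$ has dimension $2N-3$.)

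All computations here are elementary; the only places requiring care are the bookkeeping in the $2\times 2$ matrix for $G|_W$ and the non-degeneracy checks ($N\geq 2$, so that $\phi_0$ and $|x_0\rangle$ are independent and $1-2/N\neq\pm1$), which I expect to be the main — though minor — obstacle. As a consistency check, the discriminant $T = d\varGamma d^*$ then has spectrum $\{-1,\,1-2/N,\,1\}$ with $\dim\ker(T\mp1)=m_\pm$, which recovers \eqref{3:30} through the spectral mapping theorem (Theorem \ref{thm:smt}).
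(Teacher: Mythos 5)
Your proof is correct, but it follows a genuinely different route from the paper's. The paper applies its own machinery: it replaces $U$ by $-U=\varGamma(-C)$ so that $\ker(-C-1)^\perp=\mathbb{C}\chi_0$ is one-dimensional, takes the rank-one coisometry $d\psi=\langle\chi_0,\psi\rangle$, computes the scalar discriminant $T=2/N-1$, invokes Corollary \ref{cor:T<1} to get ${\rm ind}_\varGamma(-U)=M_--M_+$ with $m_\pm=0$, obtains the spectrum from the spectral mapping theorem, and then finds $M_-=1$ directly and $M_+=2N-3$ by the dimension count $2N=M_++M_-+2$, finally transferring back to $U$ via Corollary \ref{cor:-U}. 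You instead never touch the discriminant or the spectral mapping theorem: you reduce $U$ by the coin decomposition $\mathcal{H}=\mathcal{H}_+\oplus\mathcal{H}_-$, identify $U|_{\mathcal{H}_+}$ with $D_0$ and $U|_{\mathcal{H}_-}$ with the classical Grover iteration $G$, diagonalize $G$ on its invariant two-plane by the trace/determinant computation, and read off $m_+=1$, $m_-=2N-3$, $M_\pm=0$ before substituting into \eqref{eq:ind}. Both arguments are sound and give matching data (your $(m_\pm,M_\pm)$ for $(U,\varGamma)$ are exactly the paper's $(M_\mp,m_\mp)$ for $(-U,\varGamma)$, consistent with the proof of Corollary \ref{cor:-U}). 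Your route is more elementary and self-contained, and it exhibits the eigenvectors explicitly; the paper's route is designed to showcase the abstract framework (discriminant, birth eigenspaces, Corollaries \ref{cor:T<1} and \ref{cor:-U}) and generalizes more readily to situations where a hands-on block diagonalization is unavailable. Your non-degeneracy checks ($N\geq 2$, $\langle\phi_0|x_0\rangle=1/\sqrt{N}\neq 0$, $1-2/N\neq\pm1$) are exactly the points that need care, and you have handled them correctly.
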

\begin{proof}
Form Corollary \ref{cor:-U},
it suffices to calculate the spectrum of 
$U^\prime:= -U = \varGamma C^\prime$,
where $C^\prime = -C = 2|\chi_0\rangle \langle \chi_0|-1$. 
To this end, we define an operator 
$d:\mathcal{H} \to \mathcal{K} := \mathbb{C}$ 
as 
\[ d \psi = \langle \chi_0,  \psi \rangle,
	\quad \psi \in \mathcal{H}. \]
Then the adjoint $d^*:\mathcal{K} \to \mathcal{H}$ is given by
\[ d^* f = f |\chi_0\rangle, \quad f \in \mathcal{K}. \]
It is straightforward to see that $d$ is a coisometry,
{\it i.e.}, $dd^* = 1$ on $\mathcal{K}$. 
The discriminant operator is then calculated as follows.
\begin{align*}
Tf & = d \varGamma d^*f 
	= \langle \chi_0, (D_0 \otimes 1) \chi_0 \rangle f \\
& = \left( 2 |\langle x_0, \phi_0 \rangle |^2 -1 \right) f \\
& = (2/N-1) f. 
\end{align*}
Hence, $T = 2/N -1 \not=0$ and $\sigma(T)=\{2/N-1\}$. 
In particular, because $\|T\| < 1$,
Corollary \ref{cor:T<1} says that 
\begin{equation} 
\label{18:46}
{\rm ind}_\varGamma(-U) = M_- - M_+ 
\end{equation}
with $M_\pm = \dim \ker (\varGamma \pm 1) \cap \ker d$
and $m_\pm = 0$. 

To count $M_\pm$, we calculate the spectrum of $-U$.  
By Theorem \ref{thm:smt},
\[ 
\sigma(-U) \setminus \{1, -1\}
= \varphi^{-1}(2/N-1) 
= \{e^{i \arccos (2/N-1)}, e^{-i \arccos (2/N-1)} \}
\]
and $\dim \ker (-U - e^{\pm i \arccos (2/N-1)}) = 1$.
Hence, 
\begin{equation}
\label{18:48} 
2N = \dim \mathcal{H} = M_+ + M_- + 2. 
\end{equation}
Observe that
\[ \mathcal{B}_- = \ker (\varGamma - 1) \cap \ker d
	= {\rm Ran}( |\phi_0 \rangle \langle \phi_0| \otimes 1 )
		 \cap {\rm Ran} (1- |\phi_0\rangle \otimes |- \rangle),
\]
where
\begin{align*}
{\rm Ran} \varGamma_+
	=  {\rm span}\{|\phi_0 \rangle \otimes |+\rangle, \
		|\phi_0 \rangle \otimes |-\rangle \} 
\end{align*}
and
\[ \ker d = 
	{\rm span}\{|x_0\rangle \otimes |+\rangle \}
		\oplus {\rm span}
			\{|x \rangle \otimes |\star\rangle 
				\mid \star = \pm, \ x \not=x_0 \}. \]
Hence,
\[ \mathcal{B}_- 
= {\rm span}\{|\phi_0 \rangle \otimes |+\rangle \} \]
and $M_- = 1$. Combining this with \eqref{18:46} and \eqref{18:48},
we obtain $M_+ = 2N-3$ and
\[ {\rm ind}_\varGamma(-U) = 1- (2N-3) = 4-2N. \]
Therefore, Corollary \ref{cor:-U} proves \eqref{3:28}. 

From the above argument, we observe that
\[ \sigma(U) = - \sigma(-U)
	 =\{- e^{i \arccos (2/N-1)}, -e^{-i \arccos (2/N-1)},1,-1\}.
\]
Because $- e^{\pm i \arccos (2/N-1)} 
= e^{\mp i \arccos (1 - 2/N)}$,
we obtain \eqref{3:30}. 
This completes the proof.
\end{proof}

%

\subsection{The Grover walk}
\label{subsec:GW}
Let $G = (V, E)$ be a connected undirected graph
(having multiple edges and self-loops)
with $V$ the set of vertices and $E$ the set of edges. 
For the undirected graph $G$, 
we introduce a set $D$ of directed edges of $G$ as follows. 
We first determine a direction for each edge $e \in E$
and denote the origin by $o(e)$ and the terminus by $t(e)$,
and next introduce  the inverse edge of $e$
by $o(\bar{e}) = t(e)$ and $t(\bar{e}) = o(e)$. 
We then define the set $D$ as all such directed edges.
By abuse of notation,
denoting the set of directed edges determined first 
by the same symbol $E$,
we can write $D = E \cup \bar{E}$,
where $\bar{E} = \{ \bar{e} \mid e \in E \}$. 
Following the definition in \cite{Se13, HKSS14}, 
we introduce the Grover walk on $G$ as follows . 
Let $\mathcal{H} = \ell^2(D)$ be the Hilbert space
of square summable functions on $D$.  
The shift operator $S$ is defined as
\[ (S\psi)(e) = \psi(\bar{e}),
	\quad e \in D, \ \psi \in \mathcal{H}. \] 
Let
\[ \chi_v 
	= \frac{1}{\sqrt{\deg v}}
		\sum_{e \in D: o(e)=v} \delta_e, 
\]
where $\deg v = \# \{ e \in D \mid o(e) = v\}$
and $\delta_e \in \mathcal{H}$ is defined by 
$\delta_{e}(f) = 1$ ($f=e$); $\delta_{e}(f)=0$ otherwise. 
Then a coisometry $d$ from $\mathcal{H}$ to 
$\mathcal{K}:= \ell^2(V)$ is defined as
\[ (d\psi)(v) = \langle \chi_v, \psi \rangle_{\mathcal{H}},
	\quad v \in V, \ \psi \in \mathcal{H}. \]
The coin operator $C$ is defined by
\[ C = 2d^*d-1. \]
Because $S$ is a unitary involution, $U$ is written as
\[ U = \varGamma C, \]
where $\varGamma = S$ and $C$ are unitary involutions. 
Hence the Grover walk is a SUSYQW. 
 
$M_\pm$ and $m_\pm$ have already been calculated in \cite{HKSS14} 
for finite graphs and several crystal lattices. 
See also \cite{HiSe1, HiSe2, HSS}
for magnifier graphs,  infinite trees, and the Sierpi\'nski lattice. 
It is noteworty that $M_\pm$ are determined by the number of cycles
and geometric properties of the graph.  
In particular, if the total number of all cycles  is infinity,
then $M_+ = \infty$.  
From \cite[Theorem 1 and Lemma 2]{HKSS14} 
and Theorem \ref{thm:ind}, we observe that
${\rm ind}_\varGamma(U) = 0$ for all finite graphs. 
For crystal lattices such as a triangular lattice,  
a square lattice, and a hexagonal lattice
$(U, \varGamma)$ are not Fredholm pairs,
because such graphs have infinitely many cycles.



\section*{Acknowledgments} \quad
This work was supported by JSPS KAKENHI Grant Number JP18K03327
and by the Research Institute for Mathematical Sciences, a Joint Usage/
Research Center located in Kyoto University.
The author thanks Y. Matsuzawa and Y. Tanaka
for helpful comments on the index formula for finite-dimensional cases 
and for one dimensional split-step quantum walks.  

\end{document}